%%%%%%%%%%%%%%%%%%%%%%%%%%%%%%%%%%%%%%%%%%%%%%%%%%%%%%%%%%%%%%%%%%%%%%%%%%%
%
%                           Paper for arXiv
%
%                 Use the command ``pdflatex'' to compile.
%
%%%%%%%%%%%%%%%%%%%%%%%%%%%%%%%%%%%%%%%%%%%%%%%%%%%%%%%%%%%%%%%%%%%%%%%%%%%
\documentclass{llncs}

%%%%%%%%%%%%%%%%%%%%%%%%%%%%%%%%%%%%%%%%%%%%%%%%%%%%%%%%%%%%%%%%%%%%%%%%%%%

\usepackage{url}

%%%%%%%%%%%%%%%%%%%%%%%%%%%%%%%%%%%%%%%%%%%%%%%%%%%%%%%%%%%%%%%%%%%%%%%%%%%

\usepackage{amsmath}
\usepackage{amssymb}

%%%%%%%%%%%%%%%%%%%%%%%%%%%%%%%%%%%%%%%%%%%%%%%%%%%%%%%%%%%%%%%%%%%%%%%%%%%

%\usepackage{amsthm}

%\theoremstyle{plain}
%\newtheorem{theorem}{Theorem}[section]
%\newtheorem{lemma}[theorem]{Lemma}
%\newtheorem{corollary}[theorem]{Corollary}
%\newtheorem{proposition}[theorem]{Proposition}
%\newtheorem{definition}[theorem]{Definition}
%\newtheorem{remark}[theorem]{Remark}
%\newtheorem{example}[theorem]{Example}
%\newtheorem{axiom}[theorem]{Axiom}
%\newtheorem{replacements}[theorem]{Replacements}

%\newtheorem{property}{Property}

%%%%%%%%%%%%%%%%%%%%%%%%%%%%%%%%%%%%%%%%%%%%%%%%%%%%%%%%%%%%%%%%%%%%%%%%%%%

%%%%%%%%%%%%%%%%%%%%%%%%%%%%%%%%%%%%%%%%%%%%%%%%%%%%%%%%%%%%%%%%%%%%%%%%%%%

\newcommand{\abs}[1]{\left\lvert#1\right\rvert}

\DeclareMathOperator{\Dom}{dom}
\newcommand{\rest}[2]{#1\!\!\restriction_{#2}}

\newcommand{\N}{\mathbb{N}}%      \N   == \mathbb{N}
%      \Z   == \mathbb{Z}
\newcommand{\Q}{\mathbb{Q}}%      \Q   == \mathbb{Q}
\newcommand{\R}{\mathbb{R}}%      \R   == \mathbb{R}
\newcommand{\Rc}{\mathbb{R}_c}%      \Rc   == \mathbb{R}_c
\newcommand{\X}{\{0,1\}^*}%        \X  == \Sigma^*
%\newcommand{\XI}{\Sigma^\infty}%        \XI  == \Sigma^\infty
%        \XI  == \Sigma^\infty
\newcommand{\K}{H}
\newcommand{\lad}{\underline{\dim}_A}
\newcommand{\lads}[1]{\underline{\dim}_A\{#1\}}

%%%%%%%%%%%%%%%%%%%%%%%%%%%%%%%%%%%%%%%%%%%%%%%%%%%%%%%%%%%%%%%%%%%%%%%%%%%

%\DeclareMathOperator{\Prob}{Pr}
%\newcommand{\alp}{\mathcal{H}}

%%%%%%%%%%%%%%%%%%%%%%%%%%%%%%%%%%%%%%%%%%%%%%%%%%%%%%%%%%%%%%%%%%%%%%%%%%%

\newcommand{\noi}{\noindent}

%%%%%%%%%%%%%%%%%%%%%%%%%%%%%%%%%%%%%%%%%%%%%%%%%%%%%%%%%%%%%%%%%%%%%%%%%%%

\pagestyle{plain}

%%%%%%%%%%%%%%%%%%%%%%%%%%%%%%%%%%%%%%%%%%%%%%%%%%%%%%%%%%%%%%%%%%%%%%%%%%%
\begin{document}

%%%%%%%%%%%%%%%%%%%%%%%%%%%%%%%%%%%%%%%%%%%%%%%%%%%%%%%%%%%%%%%%%%%%%%%%%%%

\begin{center}
{\Large \textbf{
Partial Randomness and Dimension of\\
Recursively Enumerable Reals
}}
\end{center}

\vspace{-3mm}

\begin{center}
Kohtaro Tadaki
\end{center}

\vspace{-5mm}

\begin{center}
Research and Development Initiative, Chuo University\\
JST, CREST\\
1--13--27 Kasuga, Bunkyo-ku, Tokyo 112-8551, Japan\\
E-mail: tadaki@kc.chuo-u.ac.jp
\end{center}

\vspace{-2mm}

\begin{quotation}
\noi\textbf{Abstract.}
A real $\alpha$ is called recursively enumerable
(``r.e.'' for short)
if
there exists a computable,
increasing sequence of rationals which converges to $\alpha$.
It is known that
the randomness of an r.e.~real
$\alpha$
can be characterized in various ways
using each of the notions;
program-size complexity,
Martin-L\"{o}f test,
Chaitin $\Omega$ number,
the domination and $\Omega$-likeness of $\alpha$,
the universality of
a computable, increasing sequence of rationals
which converges to $\alpha$,
and universal probability.
In this paper,
we generalize these characterizations of randomness
over the notion of partial randomness
by parameterizing each of the notions above
by a real $T\in(0,1]$,
where the notion of partial randomness is
a stronger representation of
the compression rate
by means of
program-size complexity.
As a result,
we present ten equivalent characterizations of
the partial randomness of an r.e.~real.
The resultant characterizations
of partial randomness
are powerful and
have many important applications.
One of them is
to present equivalent characterizations of
the dimension of an individual r.e.~real.
The equivalence between
the notion of 
Hausdorff dimension and
compression rate by program-size complexity
(or partial randomness)
has been established at present
by a series of works of many researchers
over the last two decades.
We
present ten equivalent characterizations of
the dimension of an
individual
r.e.~real.
\end{quotation}

\begin{quotation}
\noi\textit{Key words\/}:
algorithmic randomness,
recursively enumerable real,
partial randomness,
dimension,
Chaitin $\Omega$ number,
program-size complexity,
universal probability
\end{quotation}

%%%%%%%%%%%%%%%%%%%%%%%%%%%%%%%%%%%%%%%%%%%%%%%%%%%%%%%%%%%%%%%%%%%%%%%%%%%
\section{Introduction}

A real $\alpha$ is called \textit{recursively enumerable}
(``r.e.'' for short)
if there exists a computable, increasing sequence of rationals
which converges to $\alpha$.
The randomness of an r.e.~real $\alpha$
can be characterized in various ways
using each of the notions;
\textit{program-size complexity},
\textit{Martin-L\"{o}f test},
\textit{Chaitin $\Omega$ number},
the \textit{domination} and \textit{$\Omega$-likeness}
of $\alpha$,
the \textit{universality} of
a computable, increasing sequence of rationals
which converges to $\alpha$,
and \textit{universal probability}.
These equivalent characterizations of randomness of an r.e.~real
are summarized in Theorem~\ref{randomness}
(see Section~\ref{previous results}),
where the equivalences are
established by a series of works of
Martin-L\"{o}f \cite{M66},
Schnorr \cite{Sch73},
Chaitin \cite{C75},
Solovay \cite{Sol75},
Calude, Hertling, Khoussainov and Wang \cite{CHKW01},
Ku\v{c}era and Slaman \cite{KS01},
and Tadaki \cite{T06},
between 1966 and 2006.
In this paper,
we generalize
these characterizations of randomness
over the notion of \textit{partial randomness},
which was introduced by Tadaki \cite{T99,T02}
and is a stronger representation of \textit{the compression rate}
by means of program-size complexity.
We introduce
many
characterizations of
partial randomness for an r.e.~real
by parameterizing each of the notions above on randomness
by a real $T\in(0,1]$.
In particular,
we introduce the notion of \textit{$T$-convergence}
for a computable, increasing sequence of rationals
and then introduce the same notion for an r.e.~real.
The notion of $T$-convergence
plays a crucial role in
these our
characterizations of
partial randomness.
We then prove the equivalence of all
these
characterizations
of partial randomness in Theorem \ref{partial randomness},
our main result,
in Section \ref{Our results}.

On the other hand,
by a series of works of
Ryabko \cite{Ryab84,Ryab86},
Staiger \cite{St93,St98},
Tadaki \cite{T99,T02},
Lutz \cite{Lutz00}, and
Mayordomo \cite{Mayor02}
over the last two decades,
the equivalence between
the notion of compression rate by program-size complexity (or partial randomness)
and Hausdorff dimension seems to be established at present.
The subject of the equivalence seems to be
one of the most active areas of the recent research
of algorithmic randomness.
In the context of
the
subject,
we can consider the notion of the \textit{dimension} of an individual real
in particular,
and this notion plays a crucial role in the subject.
As
one of the main applications of our main result on partial randomness,
i.e., Theorem~\ref{partial randomness},
we
can
present many equivalent characterizations of the dimension of
an
individual
r.e.~real.

The paper is organized as follows.
We begin in Section~\ref{preliminaries} with
some preliminaries to
algorithmic information theory and partial randomness.
In Section~\ref{previous results},
we review the previous results on
the equivalent characterizations of randomness of an r.e.~real.
Our main result on partial randomness of an r.e.~real is
presented in Section~\ref{Our results}.
In Section~\ref{dimension}
we apply our main result on partial randomness
to give many equivalent characterizations of the dimension of an r.e.~real.
In Section~\ref{T-convergence},
we investigate further properties of the notion of $T$-convergence,
which plays a crucial role in our characterizations of
the partial randomness and dimension of r.e.~reals.
We conclude this paper with
a mention of
the future direction of this work in Section~\ref{conclusion}.
Due to the 12-page limit, we omit most proofs.
A full paper which describes all the proofs and other related results
is in preparation.

%%%%% for compression
\vspace*{-3mm}

%%%%%%%%%%%%%%%%%%%%%%%%%%%%%%%%%%%%%%%%%%%%%%%%%%%%%%%%%%%%%%%%%%%%%%%%%%%
\section{Preliminaries}
\label{preliminaries}

%%%%% for compression
\vspace*{-3mm}

%%%%%%%%%%%%%%%%%%%%%%%%%%%%%%%%%%%%%%%%%%%%%%%%%%%%%%%
%\subsection{Basic notation}
%\label{basic notation}

We start with some notation about numbers and strings
which will be used in this paper.
%$\#S$ is the cardinality of $S$ for any set $S$.
$\N=\left\{0,1,2,3,\dotsc\right\}$ is the set of natural numbers,
and $\N^+$ is the set of positive integers.
%$\Z$ is the set of integers, and
%$\Q$ is the set of rational numbers.
%$\R$ is the set of real numbers.
$\Q$ is the set of rational numbers, and
$\R$ is the set of real numbers.
A sequence $\{a_n\}_{n\in\N}$ of numbers
(rationals or reals)
is called \textit{increasing} if $a_{n+1}>a_{n}$ for all $n\in\N$.
%The supremum $\sup\emptyset$ of the empty set is
%interpreted as $0$ in this paper.

$\X=
\left\{
  \lambda,0,1,00,01,10,11,000,
%  001,010,
  \dotsc
\right\}$
is the set of finite binary strings,
where $\lambda$ denotes the \textit{empty string}.
%is the set of finite binary strings
%where $\lambda$ denotes the \textit{empty string},
%and $\X$ is ordered as indicated.
%We identify any string in $\X$ with a natural number in this order,
%i.e.,
%we consider $\varphi\colon \X\to\N$ such that $\varphi(s)=1s-1$
%where the concatenation $1s$ of strings $1$ and $s$ is regarded
%as a dyadic integer,
%and then we identify $s$ with $\varphi(s)$.
For any $s \in \X$, $\abs{s}$ is the \textit{length} of $s$.
A subset $S$ of $\X$ is called \textit{prefix-free}
if no string in $S$ is a prefix of another string in $S$.
%$\XI$ is the set of infinite binary strings,
%where an infinite binary string is
%infinite to the right but finite to the left.
%For any $\alpha\in \XI$, $\alpha_n$ is the prefix of $\alpha$ of length $n$.
For any partial function $f$,
the domain of definition of $f$ is denoted by $\Dom f$.
We write ``r.e.'' instead of ``recursively enumerable.''

Normally, $o(n)$ denotes any
%one
function $f\colon \N^+\to\R$ such
that $\lim_{n \to \infty}f(n)/n=0$.
On the other hand,
$O(1)$ denotes any
%one
function $g\colon \N^+\to\R$ such that
there is $C\in\R$ with the property that
$\abs{g(n)}\le C$ for all $n\in\N^+$.

%Let $\alpha$ be an arbitrary real.
%We denote $\alpha - \lfloor \alpha \rfloor$ by $\alpha\bmod 1$,
%where $\lfloor \alpha \rfloor$ is the greatest integer
%less than or equal to $\alpha$.
%Hence, $\alpha\bmod 1 \in [0,1)$.
%Normally, $\lceil \alpha \rceil$ denotes
%the smallest integer greater than or equal to $\alpha$.
%We denote by $\rest{\alpha}{n}\in\X$
%the first $n$ bits of the base-two expansion of $\alpha\bmod 1$
%with infinitely many zeros.
%Thus,
%in particular,
%if $\alpha\in[0,1)$,
%then $\rest{\alpha}{n}$ denotes the first $n$ bits of
%the base-two expansion of $\alpha$ with infinitely many zeros.
%For example,
%in the case of $\alpha=5/8$,
%$\rest{\alpha}{6}=101000$.
Let $\alpha$ be an arbitrary real.
For any $n\in\N^+$,
we denote by $\rest{\alpha}{n}\in\X$
the first $n$ bits of the base-two expansion of
$\alpha - \lfloor \alpha \rfloor$ with infinitely many zeros,
where $\lfloor \alpha \rfloor$ is the greatest integer
less than or equal to $\alpha$.
Thus,
in particular,
if $\alpha\in[0,1)$,
then $\rest{\alpha}{n}$ denotes the first $n$ bits of
the base-two expansion of $\alpha$ with infinitely many zeros.
For example,
in the case of $\alpha=5/8$,
$\rest{\alpha}{6}=101000$.

A real $\alpha$ is called \textit{r.e.}~if
there exists a computable,
increasing sequence of rationals which converges to $\alpha$.
An r.e.~real is also called a
%\textit{left-computable real number}.
\textit{left-computable} real.
Let $\alpha$ and $\beta$ be arbitrary r.e.~reals.
Then $\alpha+\beta$ is r.e.
If $\alpha$ and $\beta$ are non-negative, then $\alpha\beta$ is r.e.
On the other hand,
a real $\alpha$ is called \textit{right-computable} if
$-\alpha$ is left-computable.
We say that a real $\alpha$ is \textit{computable} if
there exists a computable sequence $\{a_n\}_{n\in\N}$ of rationals
such that $\abs{\alpha-a_n} < 2^{-n}$ for all $n\in\N$.
It is then easy to see that,
for every $\alpha\in\R$,
$\alpha$ is computable if and only if
$\alpha$ is both left-computable and right-computable.
%We denote by $\Rc$ the set of computable real numbers.
A sequence $\{a_n\}_{n\in\N}$ of reals is called
\textit{computable} if
there exists a total recursive function $f\colon\N\times\N \to \Q$
such that
$\abs{a_n-f(n,m)} < 2^{-m}$ for all $n, m\in\N$.
%We denote by $\Rc$ the set of all computable reals.
See e.g.~Weihrauch \cite{W00}
for the detail of the treatment of
the computability of reals and
%real functions on a discrete set.
sequences of reals.

%%%%% for compression
\vspace*{-2mm}

%%%%%%%%%%%%%%%%%%%%%%%%%%%%%%%%%%%%%%%%%%%%%%%%%%%%%%%
\subsection{Algorithmic information theory}
\label{ait}

In the following
we concisely review some definitions and results of
algorithmic information theory
\cite{C75,C87b}.
%We assume that the reader is familiar with algorithmic information theory.
%A \textit{computer} is a partial recursive function
%$C\colon \X\to \X$ such that $\Dom C$ is a prefix-free set.
%For each computer $C$ and each $s \in \X$,
%$\K_C(s)$ is defined as
%$\min
%\left\{\,
%  \abs{p}\,\big|\;p \in \X\>\&\>C(p)=s
%\,\right\}$
%(may be $\infty$).
%A computer $U$ is said to be \textit{optimal} if
%for each computer $C$ there exists a constant $\Sim(C)$
%with the following property;
%if $C(p)$ is defined, then there is a $p'$ for which
%$U(p')=C(p)$ and $\abs{p'}\le\abs{p}+\Sim(C)$.
%It is easy to see that there exists an optimal computer.
%We choose a particular optimal computer $U$
%as the standard one for use,
%and define $\K(s)$ as $\K_U(s)$,
%which is referred to as
%the \textit{program-size complexity} of $s$ or
%the \textit{Kolmogorov complexity} of $s$.
%Thus, $H(s) \le H_C(s) + \Sim(C)$ for every computer $C$.
%%%%%
A \textit{computer} is a partial recursive function
$C\colon \X\to \X$
%whose domain of definition
such that
$\Dom C$ is a prefix-free set.
%For each computer $C$ and each $s \in \X$,
%$H_C(s)$ is defined by
%$H_C(s) =
%\min
%\left\{\,
%  \abs{p}\,\big|\;p \in \X\>\&\>C(p)=s
%\,\right\}$
%(may be $\infty$).
For each computer $C$ and each $s \in \X$,
$H_C(s)$ is defined as
$\min
\left\{\,
  \abs{p}\,\big|\;p \in \X\>\&\>C(p)=s
\,\right\}$
(may be $\infty$).
A computer $U$ is said to be \textit{optimal} if
%for each computer $C$ there exists a constant $\Sim(C)\in\N$
%with the following property;
%if $C(p)$ is defined, then there is a $p'$ for which
%$U(p')=C(p)$ and $\abs{p'}\le\abs{p}+\Sim(C)$.
for each computer $C$ there exists $d\in\N$
with the following property;
if $p\in\Dom C$, then there is $q\in\Dom U$ for which
$U(q)=C(p)$ and $\abs{q}\le\abs{p}+d$.
%It is then shown that there exists an optimal computer.
It is easy to see that there exists an optimal computer.
We choose a particular optimal computer $U$
as the standard one for use,
and define $H(s)$ as $H_U(s)$,
which is referred to as
the \textit{program-size complexity} of $s$ or
the \textit{Kolmogorov complexity} of $s$.
%Thus, $H(s)$ has the following property:
%\begin{equation}
%  \forall\,C:\text{computer}\quad
%  H(s) \le H_C(s) + \Sim(C). \label{eq: k}
%\end{equation}
%Thus, $H(s) \le H_C(s) + \Sim(C)$ for every computer $C$
%and every $s\in\X$.
%It follows that
%\begin{equation}
%  H(s) \le H_C(s) + \Sim(C) \label{eq: k}
%\end{equation}
%for every computer $C$ and every $s\in\X$.
It follows that
for every computer $C$ there exists $d\in\N$ such that,
for every $s\in\X$, $H(s)\le H_C(s)+d$.

%Let $V$ be an arbitrary optimal computer.
For any optimal computer $V$,
%For each $s\in \X$, $P_V(s)$ is defined as $\sum_{V(p)=s}2^{-\abs{p}}$.
\textit{Chaitin's halting probability} $\Omega_V$ of $V$ is defined
%by
%\begin{equation*}
%  \Omega_V=\sum_{p\in\Dom V}2^{-\abs{p}}.
%\end{equation*}
as $\sum_{p\in\Dom V}2^{-\abs{p}}$.
%Thus, $\Omega_V=\sum_{s\in\X}P_V(s)$.
The real $\Omega_V$ is also called \textit{Chaitin $\Omega$ number}.

%For any $\alpha\in\XI$,
%we say that $\alpha$ is
%%\textit{random}
%\textit{weakly Chaitin random}
%if there exists $c\in\N$ such that
%$n-c\le \K(\alpha_n)$ for all $n\in\N^+$
%\cite{C75,C87b}.

\begin{definition}[weak Chaitin randomness, Chaitin \cite{C75,C87b}]
%  For any $\alpha \in\XI$,
  For any $\alpha\in\R$,
  we say that $\alpha$ is \textit{weakly Chaitin random} if
%  \begin{equation*}
%    \exists\,c\in\N\quad\forall\,n\in\N^+\quad
%    Tn-c \le H(\alpha_n).
%  \end{equation*}
  there exists $c\in\N$ such that
  $n-c \le H(\rest{\alpha}{n})$
  for all $n\in\N^+$.
  \qed
\end{definition}

Chaitin \cite{C75} showed that,
for every optimal computer $V$,
$\Omega_V$ is weakly Chaitin random.
%It is also shown that $0<\Omega<1$.

\begin{definition}[Martin-L\"{o}f randomness, Martin-L\"{o}f \cite{M66}]
A subset $\mathcal{C}$ of $\N^+\times\X$
is called a \textit{Martin-L\"{o}f test} if
$\mathcal{C}$ is an r.e.~set and
\begin{equation*}
  \forall\,n\in\N^+\quad
  \sum_{s\,\in\,\mathcal{C}_n}2^{-\abs{s}}
  \le 2^{-n},
\end{equation*}
where
$\mathcal{C}_n
=
\left\{\,
  s\bigm|(n,s)\in\mathcal{C}
\,\right\}$.
%For any $\alpha\in\XI$,
For any $\alpha\in\R$,
we say that $\alpha$ is \textit{Martin-L\"{o}f random} if
%\begin{equation*}
%  \forall\,\mathcal{C}:\text{Martin-L\"{o}f test}
%  \quad
%  \exists\,n\in\N^+\quad
%  \forall\,k\in\N^+\quad
%  \alpha_k\notin\mathcal{C}_n.
%\end{equation*}
for every Martin-L\"{o}f test $\mathcal{C}$,
there exists $n\in\N^+$ such that, for every $k\in\N^+$,
$\rest{\alpha}{k}\notin\mathcal{C}_n$.\qed
\end{definition}

\begin{theorem}[Schnorr \cite{Sch73}]\label{wcem}
%  For any $\alpha\in\XI$,
  For every $\alpha\in\R$,
  $\alpha$ is weakly Chaitin random if and only if
  $\alpha$ is Martin-L\"{o}f random.\qed
\end{theorem}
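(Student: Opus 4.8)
The plan is to establish both directions by contraposition, in each case passing between a program-size condition and a combinatorial/measure-theoretic object.

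For the implication \emph{Martin-L\"of random $\Rightarrow$ weakly Chaitin random}, I would assume $\alpha$ is not weakly Chaitin random and produce a Martin-L\"of test that covers $\alpha$ at every level. The natural candidate is $\mathcal{C}=\{(n,s)\in\N^+\times\X : H(s)\le\abs{s}-n\}$. It is r.e.\ because $H(s)\le\abs{s}-n$ holds iff there is $p\in\X$ with $\abs{p}\le\abs{s}-n$ and $U(p)=s$, a $\Sigma^0_1$ condition. For fixed $n$, since each $s\in\mathcal{C}_n$ carries a distinct $U$-program of length $H(s)$ and $\Dom U$ is prefix-free, the Kraft inequality gives $\sum_{s\in\mathcal{C}_n}2^{-\abs{s}}\le 2^{-n}\sum_{s\in\mathcal{C}_n}2^{-H(s)}\le 2^{-n}$, so $\mathcal{C}$ is a Martin-L\"of test. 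Now if $\alpha$ is not weakly Chaitin random, then for each $m\in\N^+$ there is $k$ with $H(\rest{\alpha}{k})<k-(m-1)$, hence $H(\rest{\alpha}{k})\le k-m=\abs{\rest{\alpha}{k}}-m$ and $\rest{\alpha}{k}\in\mathcal{C}_m$; thus $\alpha$ meets $\mathcal{C}$ at every level and is not Martin-L\"of random.

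For the converse, \emph{weakly Chaitin random $\Rightarrow$ Martin-L\"of random}, I would assume $\alpha$ is not Martin-L\"of random, fix a Martin-L\"of test $\mathcal{C}$ with the property that for every $n$ there is $k$ with $\rest{\alpha}{k}\in\mathcal{C}_n$, and build a computer that compresses these prefixes by feeding a request set into Chaitin's machine-existence (Kraft--Chaitin) theorem. Taking, for each $n\in\N^+$ and each $s\in\mathcal{C}_{2n}$, the request $(\abs{s}-n,\,s)$, one has $2^{-\abs{s}}\le\sum_{t\in\mathcal{C}_{2n}}2^{-\abs{t}}\le 2^{-2n}$, so $\abs{s}-n\ge n\ge 1$ and the requested lengths are genuine positive integers; moreover the total weight is $\sum_{n\ge 1}2^{n}\sum_{s\in\mathcal{C}_{2n}}2^{-\abs{s}}\le\sum_{n\ge 1}2^{n}2^{-2n}\le 1$. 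Since this list of requests is r.e., the Kraft--Chaitin theorem yields a computer $V$ with $H_V(s)\le\abs{s}-n$ whenever $s\in\mathcal{C}_{2n}$, and optimality of $U$ gives a constant $d$ with $H(s)\le\abs{s}-n+d$ for all such $s$. Given any $c\in\N$, pick $n>c+d$ and then $k$ with $\rest{\alpha}{k}\in\mathcal{C}_{2n}$; then $H(\rest{\alpha}{k})\le k-n+d<k-c$, and as $c$ was arbitrary $\alpha$ is not weakly Chaitin random.

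The two Kraft-type estimates are routine; the only genuinely delicate point is the \emph{dilution} in the second direction. Spending the entire measure saving of test level $n$ on the code length would produce total request weight $\sum_n 2^n\cdot 2^{-n}=\infty$, so one must use only a fixed fraction of it --- drawing the request of length $\abs{s}-n$ from level $2n$ rather than from level $n$ --- which keeps the series summable while still letting the length deficit $n-d$ grow without bound. The sole external ingredient is the Kraft--Chaitin theorem, converting an r.e.\ set of length requests with Kraft sum at most $1$ into an actual prefix-free machine; beyond that the proof is just bookkeeping with the additive constants.
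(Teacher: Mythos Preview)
Your argument is correct and is precisely the standard proof of Schnorr's theorem: the universal test $\mathcal{C}_n=\{s:H(s)\le\abs{s}-n\}$ for one direction, and a Kraft--Chaitin construction with the ``halving'' trick (request length $\abs{s}-n$ for $s\in\mathcal{C}_{2n}$) for the other. The bookkeeping with the integer inequalities and the additive constant is handled cleanly.

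There is, however, nothing to compare against: the paper does not prove this theorem. It is stated as a cited result from Schnorr~\cite{Sch73} and closed immediately with \verb|\qed|; the paper uses it only as background for the later generalization (Theorem~\ref{Twcem}) to $T$-randomness, which is likewise stated without proof. So your proposal stands on its own as a correct proof of a result the paper merely quotes.
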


The program-size complexity $\K(s)$ is originally defined
using the concept of program-size, as stated above.
However,
it is possible to define $\K(s)$ without referring to such a concept,
i.e.,
as in the following,
we first introduce a \textit{universal probability} $m$,
and then define $\K(s)$ as $-\log_2 m(s)$.
A universal probability is defined as follows \cite{ZL70}.

\begin{definition}[universal probability]
  A function $r\colon \X\to[0,1]$ is called
  a \textit{lower-computable semi-measure} if
  $\sum_{s\in \X}r(s)\le 1$ and
  the set $\{(a,s)\in\Q\times\X\mid a<r(s)\}$ is r.e.
  We say that a lower-computable semi-measure $m$ is
  a \textit{universal probability} if
  for every lower-computable semi-measure $r$,
  there exists $c\in\N^+$ such that,
  for all $s\in \X$, $r(s)\le cm(s)$.\qed
\end{definition}

The following theorem can be then shown
(see e.g.~Theorem 3.4 of Chaitin \cite{C75} for its proof).

\begin{theorem}%
\label{eup}
  For every optimal computer $V$,
%  both $2^{-\K_V(s)}$ and $P_V(s)$ are universal probabilities.
  the function $2^{-\K_V(s)}$ of $s$ is a universal probability.
  \qed
\end{theorem}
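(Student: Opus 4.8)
The plan is to verify in turn the two requirements in the definition of a universal probability for the function $m(s):=2^{-\K_V(s)}$: that $m$ is a lower-computable semi-measure, and that $m$ dominates, up to a multiplicative constant, an arbitrary lower-computable semi-measure $r$. The first requirement is the routine one. For the sum condition, I would pick for each $s$ with $\K_V(s)<\infty$ a shortest $V$-program $p_s$ (so $\abs{p_s}=\K_V(s)$ and $V(p_s)=s$); since distinct strings $s$ receive distinct programs $p_s$, the set $\{p_s\}$ is a subset of the prefix-free set $\Dom V$ and is therefore itself prefix-free, so Kraft's inequality yields $\sum_s m(s)=\sum_s 2^{-\abs{p_s}}\le\sum_{p\in\Dom V}2^{-\abs{p}}=\Omega_V\le 1$, and plainly $m(s)\in[0,1]$. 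For the computability condition, I would dovetail the computation of $V$ on all inputs and let $\K_V^{(t)}(s)$ be the length of the shortest program discovered by stage $t$ that $V$ maps to $s$ (and $\infty$ if none yet); then $2^{-\K_V^{(t)}(s)}$ is a computable, nondecreasing sequence of rationals converging to $m(s)$, so $\{(a,s)\in\Q\times\X\mid a<m(s)\}=\{(a,s)\mid\exists t\ a<2^{-\K_V^{(t)}(s)}\}$ is r.e.

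For the second requirement, let $r$ be a lower-computable semi-measure. From the r.e.\ set $\{(a,s)\mid a<r(s)\}$ one first extracts, for each $s$, a computable nondecreasing sequence of rationals $r_t(s)$ converging to $r(s)$ (running maxima). The key idea is to build a computer $C$ with $\K_C(s)\le-\log_2 r(s)+O(1)$ for every $s$ with $r(s)>0$. To do this I would discretize $r$ into dyadic length ``requests'': track the integer $\lfloor\log_2 r_t(s)\rfloor$ (nondecreasing in $t$, always $\le 0$), and each time it strictly increases to a new value $m$, emit a request for a $C$-program for $s$ of length $1-m$ (weight $2^{m-1}$). Because each integer value of $\lfloor\log_2 r_t(s)\rfloor$ is attained at most once and every attained value is at most $\lfloor\log_2 r(s)\rfloor$, the weights of the requests for a fixed $s$ sum to at most $\sum_{m\le\lfloor\log_2 r(s)\rfloor}2^{m-1}\le r(s)$, hence the weights of all requests sum to at most $\sum_s r(s)\le 1$. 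The Kraft--Chaitin construction (the standard method for building a prefix machine that realizes a bounded r.e.\ list of length requests) then produces a computer $C$ fulfilling all requests, and since $\lfloor\log_2 r_t(s)\rfloor$ eventually reaches at least $\lfloor\log_2 r(s)\rfloor-1$, the shortest request length issued for $s$, hence $\K_C(s)$, is at most $-\log_2 r(s)+O(1)$; equivalently $2^{-\K_C(s)}\ge c_0\, r(s)$ for a fixed constant $c_0>0$.

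It then remains only to invoke the optimality of $V$: there is $d\in\N$ with $\K_V(s)\le\K_C(s)+d$ for all $s$, so $r(s)\le c_0^{-1}2^{-\K_C(s)}\le c_0^{-1}2^{d}\,2^{-\K_V(s)}=c\,m(s)$ with $c:=\lceil c_0^{-1}2^{d}\rceil\in\N^+$ (the inequality being trivial when $r(s)=0$). This gives the domination property and completes the verification that $2^{-\K_V(s)}$ is a universal probability. I expect the main obstacle to be the middle step: getting the dyadic discretization of $r$ exactly right, so that the total Kraft weight stays at most $1$ while the largest single request for each $s$ still carries weight $\Theta(r(s))$. Once a correct request set is in hand, the Kraft--Chaitin machine construction is standard and the transfer to $\K_V$ via the optimality of $V$ is immediate.
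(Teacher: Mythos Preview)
The paper does not give its own proof of this theorem; it simply refers the reader to Theorem~3.4 of Chaitin~\cite{C75}. Your proposal is correct and is precisely the classical argument found there: show $2^{-\K_V(s)}$ is a lower-computable semi-measure via Kraft's inequality on $\Dom V$ and dovetailing, then for an arbitrary lower-computable semi-measure $r$ use a dyadic discretization together with the Kraft--Chaitin theorem to build a computer $C$ with $\K_C(s)\le -\log_2 r(s)+O(1)$, and finally transfer to $V$ by optimality.
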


By Theorem~\ref{eup},
we see that $\K(s)=-\log_2 m(s)+O(1)$
for every universal probability $m$.
%By Theorem \ref{eup},
%we see that,
%for every universal probability $m$,
%$\K(s)=-\log_2 m(s)+O(1)$ for all $s\in\X$.
Thus it is possible to define $\K(s)$ as $-\log_2 m(s)$
%with any one universal probability $m$
with a particular universal probability $m$
instead of as $\K_U(s)$.
Note that
the difference up to an additive constant is
nonessential
to algorithmic information theory.
%Any universal probability is not computable,
%as corresponds to the uncomputability of $\K(s)$.
%As a result, we see that
%$0<\sum_{s\in\X}m(s)<1$ for every universal probability $m$.

%%%%% for compression
\vspace*{-2mm}

%%%%%%%%%%%%%%%%%%%%%%%%%%%%%%%%%%%%%%%%%%%%%%%%%%%%%%%
\subsection{Partial randomness}
\label{partial}

In the works \cite{T99,T02},
we generalized the notion of
the randomness of
a real
so that \textit{the degree of the randomness},
which is often referred to
as
\textit{the partial randomness} recently
\cite{CST06,RS05,CS06},
can be characterized by a real $T$
with $0<T\le 1$ as follows.

\begin{definition}[\boldmath weak Chaitin $T$-randomness]
  Let $T\in\R$ with $T\ge 0$.
  For any $\alpha\in\R$,
  we say that $\alpha$ is \textit{weakly Chaitin $T$-random} if
  there exists $c\in\N$ such that
  $Tn-c \le H(\rest{\alpha}{n})$
  for all $n\in\N^+$.
  \qed
\end{definition}

\begin{definition}[\boldmath Martin-L\"{o}f $T$-randomness]
Let $T\in\R$ with $T\ge 0$.
A subset $\mathcal{C}$ of $\N^+\times\X$
is called a \textit{Martin-L\"{o}f $T$-test} if
$\mathcal{C}$ is an r.e.~set and
\begin{equation*}
  \forall\,n\in\N^+\quad
  \sum_{s\,\in\,\mathcal{C}_n}2^{-T\abs{s}}
  \le 2^{-n}.
\end{equation*}
For any $\alpha\in\R$,
we say that $\alpha$ is \textit{Martin-L\"{o}f $T$-random} if
for every Martin-L\"{o}f $T$-test $\mathcal{C}$,
there exists $n\in\N^+$ such that, for every $k\in\N^+$,
$\rest{\alpha}{k}\notin\mathcal{C}_n$.\qed
\end{definition}

In the case where $T=1$,
the weak Chaitin $T$-randomness and Martin-L\"{o}f $T$-randomness
result in weak Chaitin randomness and Martin-L\"{o}f randomness,
respectively.
Tadaki \cite{T02} generalized Theorem~\ref{wcem}
over the notion of $T$-randomness
as follows.

\begin{theorem}[Tadaki \cite{T02}]\label{Twcem}
  Let $T$ be a computable real with $T\ge 0$.
  Then,
  for every $\alpha\in\R$,
  $\alpha$ is weakly Chaitin $T$-random if and only if
  $\alpha$ is Martin-L\"{o}f $T$-random.\qed
\end{theorem}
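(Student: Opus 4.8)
The plan is to adapt Schnorr's classical proof of Theorem~\ref{wcem} (the case $T=1$), parameterizing every weight $2^{-\abs{s}}$ by $2^{-T\abs{s}}$ and every bound $n-c$ by $Tn-c$, and to argue both implications by contraposition. The case $T=0$ is trivial, since then every Martin-L\"of $0$-test has $\mathcal{C}_n=\emptyset$ for $n\ge 1$ and every $\alpha$ is weakly Chaitin $0$-random; so assume $T>0$. Computability of $T$ will be used in an essential way in both directions, and this is the only point that goes beyond the $T=1$ argument.

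For ``weakly Chaitin $T$-random $\Rightarrow$ Martin-L\"of $T$-random'', suppose $\alpha$ is not Martin-L\"of $T$-random, as witnessed by a Martin-L\"of $T$-test $\mathcal{C}$: for every $n\in\N^+$ there is $k$ with $\rest{\alpha}{k}\in\mathcal{C}_n$. Define
\[
  r(s)=\sum_{\substack{n\in\N^+\\ s\in\mathcal{C}_{2n}}}2^{\,n-T\abs{s}}
  \qquad(s\in\X).
\]
Since $T$ is computable, each term $2^{\,n-T\abs{s}}$ is uniformly lower-computable, so $\{(a,s)\in\Q\times\X\mid a<r(s)\}$ is r.e.; and $\sum_{s\in\X}r(s)\le\sum_{n\in\N^+}2^{n}\cdot 2^{-2n}\le 1$, so $r$ is a lower-computable semi-measure. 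By Theorem~\ref{eup} (and $H(s)=-\log_2 m(s)+O(1)$ for a universal $m$) it follows that $H(s)\le T\abs{s}-n+O(1)$ whenever $s\in\mathcal{C}_{2n}$. (Alternatively one may build a computer directly via the Kraft--Chaitin theorem; the only extra care is that, $T$ being computable, one can compute from $(s,n)$ an integer $d$ with $T\abs{s}-n\le d\le T\abs{s}-n+O(1)$, after which the requested lengths sum to at most $1$.) Now for each $n$ pick $k=k(n)$ with $\rest{\alpha}{k}\in\mathcal{C}_{2n}$; since every string in $\mathcal{C}_{2n}$ has length $\ge 2n/T$, we have $k(n)\to\infty$, while $H(\rest{\alpha}{k(n)})\le Tk(n)-n+O(1)$. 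Hence $Tk(n)-H(\rest{\alpha}{k(n)})\to\infty$, so $\alpha$ is not weakly Chaitin $T$-random.

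For the converse, suppose $\alpha$ is not weakly Chaitin $T$-random: for every $c\in\N^+$ there is $n$ with $H(\rest{\alpha}{n})<Tn-c$. Put $V_c=\{\,s\in\X\mid H(s)<T\abs{s}-c\,\}$. Because $H$ is upper semi-computable and $T$ is computable, the condition ``$\exists p\,(U(p)=s\ \&\ \abs{p}<T\abs{s}-c)$'' can be semi-decided (dovetail over $p\in\Dom U$ and over rational approximations to $T\abs{s}$), so $V_c$ is r.e., uniformly in $c$. For $s\in V_c$ one has $2^{-T\abs{s}}<2^{-c}\,2^{-H(s)}$, hence
\[
  \sum_{s\in V_c}2^{-T\abs{s}}\le 2^{-c}\sum_{s\in\X}2^{-H(s)}
  \le 2^{-c}\sum_{p\in\Dom U}2^{-\abs{p}}=2^{-c}\Omega_U\le 2^{-c}.
\]
Thus $\mathcal{C}$ defined by $\mathcal{C}_c=V_c$ ($c\in\N^+$) is a Martin-L\"of $T$-test, and by the property of $\alpha$, for every $c$ there is $k$ with $\rest{\alpha}{k}\in V_c=\mathcal{C}_c$; hence $\alpha$ is not Martin-L\"of $T$-random.

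I expect the main obstacle to be precisely the treatment of the real parameter $T$: one must check that the auxiliary objects ($r$, the Kraft--Chaitin requests, and the sets $V_c$) remain effective even though the weights $2^{-T\abs{s}}$ and the thresholds $T\abs{s}-c$ are irrational in general, and this is where the hypothesis that $T$ is computable is indispensable. The rest — the reindexing $\mathcal{C}_n\mapsto\mathcal{C}_{2n}$ to keep the total weight at most $1$, and the boundary cases in rounding to integer program lengths — is routine and follows the pattern of the proof of Theorem~\ref{wcem}.
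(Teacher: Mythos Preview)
The paper does not give a proof of Theorem~\ref{Twcem}; it is stated as a known result from \cite{T02} and marked with a terminal \emph{qed}. Your argument is correct and is precisely the expected adaptation of Schnorr's proof of Theorem~\ref{wcem}: parameterize the weights by $2^{-T\abs{s}}$, use the computability of $T$ to keep the auxiliary semi-measure $r$ lower-computable and the sets $V_c$ uniformly r.e., and run both implications by contraposition. This is almost certainly the argument given in \cite{T02} as well.
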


\begin{definition}[\boldmath $T$-compressibility]
Let $T\in\R$ with $T\ge 0$.
For any $\alpha\in\R$,
we say that $\alpha$ is \textit{$T$-compressible} if
$H(\rest{\alpha}{n})\le Tn+o(n)$,
which is equivalent to
$\limsup_{n \to \infty}H(\rest{\alpha}{n})/n\le T$.
\qed
\end{definition}

For every $T\in[0,1]$ and every $\alpha\in\R$,
if $\alpha$ is weakly Chaitin $T$-random and $T$-compressible,
then
\begin{equation}\label{compression-rate}
  \lim_{n\to \infty} \frac{H(\rest{\alpha}{n})}{n} = T.
\end{equation}
The
left-hand side of \eqref{compression-rate}
is referred to as the \textit{compression rate} of
a real $\alpha$ in general.
Note, however, that \eqref{compression-rate}
does not necessarily imply that $\alpha$ is weakly Chaitin $T$-random.
Thus, the notion of partial randomness
is
a stronger representation of compression rate.

In the works \cite{T99,T02},
we generalized
Chaitin $\Omega$ number
to $\Omega(T)$
as follows.
For each optimal computer $V$ and each real $T>0$,
the \textit{generalized halting probability} $\Omega_V(T)$ of $V$ is
defined
by
%%%%% for compression
\vspace*{-2mm}
\begin{equation*}
  \Omega_V(T) = \sum_{p\in\Dom V}2^{-\frac{\abs{p}}{T}}.
\end{equation*}
Thus,
$\Omega_V(1)=\Omega_V$.
If $0<T\le 1$, then $\Omega_V(T)$ converges and $0<\Omega_V(T)<1$,
since $\Omega_V(T)\le \Omega_V<1$.
The following theorem holds for $\Omega_V(T)$.

\begin{theorem}[Tadaki \cite{T99,T02}]\label{pomgd}
Let $V$ be an optimal computer and let $T\in\R$.
\begin{enumerate}
  \item If $0<T\le 1$ and $T$ is computable,
    then $\Omega_V(T)$ is weakly Chaitin $T$-random and
    $T$-compressible.
  \item If $1<T$, then $\Omega_V(T)$ diverges to $\infty$.\qed
\end{enumerate}
\end{theorem}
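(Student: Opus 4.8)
The plan is to treat the two parts by different methods: (ii) by a counting estimate, and (i) by a lower bound on $H(\rest{\Omega_V(T)}{n})$ (weak Chaitin $T$-randomness) in the style of Chaitin's proof that $\Omega$ is random, together with a matching upper bound ($T$-compressibility). For (ii), suppose $T>1$. The key input is that an optimal computer has many short programs: since every string $s$ of length $m$ has a $V$-program of length $\le m+H(m)+O(1)$ and $H(m)=O(\log m)$, we get $\#\{p\in\Dom V\mid\abs{p}\le L\}\ge 2^{L-O(\log L)}$ for all $L$. As $2^{-L/T}$ is decreasing, $\Omega_V(T)\ge\sum_{\abs{p}\le L}2^{-\abs{p}/T}\ge 2^{L-O(\log L)}\cdot2^{-L/T}=2^{L(1-1/T)-O(\log L)}$, which tends to $\infty$ with $L$ because $1-1/T>0$; since the series has non-negative terms, $\Omega_V(T)$ diverges to $\infty$.

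For the lower bound in (i), put $\beta=\Omega_V(T)$ with $0<T\le1$ and $T$ computable, so that $0<\beta<1$, $\beta$ is irrational, and $\beta$ is r.e.\ via the computable increasing sequence of partial sums along an enumeration $F_1\subseteq F_2\subseteq\dotsb$ of $\Dom V$ (computability of $T$ makes the terms $2^{-\abs{p}/T}$ uniformly computable). From $\rest{\beta}{n}$ one reads off $n$ and the rational $q$ with binary expansion $0.\rest{\beta}{n}$, so $q<\beta<q+2^{-n}$; now search for the least $k$ with $\sum_{p\in F_k}2^{-\abs{p}/T}>q$ and set $F=F_k$. The uncovered weight is $\sum_{p\in\Dom V\setminus F}2^{-\abs{p}/T}=\beta-\sum_{p\in F}2^{-\abs{p}/T}<2^{-n}$, which forces $\abs{p}>Tn$ for every $p\in\Dom V\setminus F$; hence $F$ contains every $V$-program of length $\le Tn$, so from $F$ we compute the finite set $\{\,s\mid H_V(s)\le\lfloor Tn\rfloor\,\}$ and its first missing string $s_n$ (length-lexicographically), which has $H_V(s_n)>Tn$. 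Since this whole procedure computes $s_n$ from the string $\rest{\beta}{n}$, composing it with $U$ gives a computer $D$ with $H_D(s_n)\le H(\rest{\beta}{n})$, so $H(s_n)\le H(\rest{\beta}{n})+O(1)$; and optimality of $V$ gives $H(s_n)\ge H_V(s_n)-O(1)>Tn-O(1)$. Combining these, $H(\rest{\beta}{n})\ge Tn-O(1)$ for all $n$, i.e.\ $\beta$ is weakly Chaitin $T$-random.

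The remaining part — $T$-compressibility, $H(\rest{\beta}{n})\le Tn+o(n)$, which is nonvacuous only for $T<1$ — is the \textbf{main obstacle}. The naive route recovers $\rest{\beta}{n}$ from the $V$-halting information up to some length $m$: knowing which $p\in\Dom V$ have $\abs{p}\le m$ one computes $\sum_{\abs{p}\le m}2^{-\abs{p}/T}$, with truncation error $\sum_{\abs{p}>m}2^{-\abs{p}/T}\le2^{-m(1-T)/T}$ (using only $\sum_{p\in\Dom V}2^{-\abs{p}}\le1$); but forcing this below $2^{-n}$ needs $m\gtrsim nT/(1-T)$, and that information costs about $m>Tn$ bits, so the bound overshoots by a constant factor. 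The real argument must avoid committing to a single cut-off and must exploit the rigid form of $\beta=\sum_{p\in\Dom V}2^{-\abs{p}/T}$: the increments are $2^{-\ell/T}$ with multiplicities $c_\ell=\#\{p\in\Dom V\mid\abs{p}=\ell\}$ satisfying the Kraft bound $\sum_\ell c_\ell2^{-\ell}\le1$. The plan is a scale-by-scale encoding of the approximating data, recording each $c_\ell$ only to the precision in which it influences $\rest{\beta}{n}$ and bounding the total description length by balancing the Kraft budget against the $o(n)$ slack (and the computability of $T$). Carrying this estimate through is the crux; it reproves the $T$-compressibility of $\Omega_V(T)$ from \cite{T02}, and it is the natural place to introduce the notion of $T$-convergence, first for a computable increasing sequence of rationals and then for an r.e.\ real.
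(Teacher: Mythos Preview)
The paper itself gives no proof of this theorem---it is quoted from \cite{T99,T02} and marked \qed---so there is nothing in the paper to compare against directly. Your arguments for (ii) and for the weak Chaitin $T$-randomness half of (i) are correct and are the standard ones.

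The gap is in the $T$-compressibility half of (i), and it comes from a miscalculation that led you to discard the right approach. Your ``naive route'' in fact succeeds. The point is that if you spend your $m$ bits on the string $\rest{\Omega_V}{m}$, you obtain strictly more than the set $\{p\in\Dom V:\abs{p}\le m\}$: you obtain a finite $F\subseteq\Dom V$ with
\[
\sum_{p\in\Dom V\setminus F}2^{-\abs{p}}<2^{-m},
\]
not merely $\le 1$. Every $p\in\Dom V\setminus F$ then has $\abs{p}>m$, so $2^{-\abs{p}(1/T-1)}\le 2^{-m(1/T-1)}$, and therefore
\[
\Omega_V(T)-\sum_{p\in F}2^{-\abs{p}/T}
=\sum_{p\in\Dom V\setminus F}2^{-\abs{p}/T}
\le 2^{-m(1/T-1)}\!\!\sum_{p\in\Dom V\setminus F}2^{-\abs{p}}
<2^{-m(1/T-1)}\cdot 2^{-m}=2^{-m/T}.
\]
Thus $m=\lceil Tn\rceil+O(1)$ already pins down $\rest{\Omega_V(T)}{n}$, giving $H(\rest{\Omega_V(T)}{n})\le Tn+O(\log n)$. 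Your bound $2^{-m(1-T)/T}$ arose only because you plugged in the global Kraft inequality instead of the tail bound that $\rest{\Omega_V}{m}$ hands you for free. No scale-by-scale encoding is needed, and this step has nothing to do with $T$-convergence; that notion enters the paper for entirely different reasons.
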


%%%%% for compression
\vspace*{-3mm}

%%%%%%%%%%%%%%%%%%%%%%%%%%%%%%%%%%%%%%%%%%%%%%%%%%%%%%%%%%%%%%%%%%%%%%%%%%%
\section{Previous results on the randomness of an r.e.~real}
\label{previous results}

In this section,
we review the previous results on the randomness of an r.e.~real.
First we review some notions on r.e.~reals.

\begin{definition}[\boldmath $\Omega$-likeness]
For any r.e.~reals $\alpha$ and $\beta$,
we say that $\alpha$ \textit{dominates} $\beta$
if there are computable, increasing sequences $\{a_n\}$ and $\{b_n\}$
of rationals and $c\in\N^+$
such that
$\lim_{n\to\infty} a_n =\alpha$,
$\lim_{n\to\infty} b_n =\beta$,
and $c(\alpha-a_n)\ge \beta-b_n$ for all $n\in\N$.
An r.e.~real $\alpha$ is called \textit{$\Omega$-like}
if it dominates all r.e.~reals.
\qed
\end{definition}

Solovay
\cite{Sol75}
showed the following theorem.
For
its
proof, see also Theorem 4.9 of \cite{CHKW01}.

\begin{theorem}[Solovay \cite{Sol75}]\label{Solovay}
For every r.e.~reals $\alpha$ and $\beta$,
if $\alpha$ dominates $\beta$ then
$H(\rest{\beta}{n})\le H(\rest{\alpha}{n})+O(1)$
for all $n\in\N^+$.
\qed
\end{theorem}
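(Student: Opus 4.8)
The plan is to build a single computer $C$ which, from any $U$-program for $\rest{\alpha}{n}$, produces a $C$-program for $\rest{\beta}{n}$ that is longer by only a fixed additive constant; applying the optimality of $U$ to $C$ then yields $H(\rest{\beta}{n})\le H(\rest{\alpha}{n})+O(1)$. First I would reduce to the case $\alpha,\beta\in(0,1)$ (the integer parts contribute only $O(1)$ and $\rest{\cdot}{n}$ already refers to the fractional part), fix computable increasing sequences $\{a_n\},\{b_n\}$ of rationals with $a_n\to\alpha$, $b_n\to\beta$ and a constant $c\in\N^+$ with $c(\alpha-a_n)\ge\beta-b_n$ for all $n$, and choose $k\in\N^+$ with $2c\le 2^{k}$. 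The intuition is: knowing $\rest{\alpha}{n}$ lets us effectively find an $m$ for which $a_m$ is within $2^{-n+1}$ of $\alpha$; domination then forces $b_m$ to be within $2^{-n+k}$ of $\beta$, which pins down $\rest{\beta}{n}$ to one of at most $2^{k+1}$ candidates, and the missing $\le k+1$ bits can be supplied as a constant-length ``advice'' suffix.

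Concretely, I would define $C$ on an input $q$ as follows. Dovetail the enumeration of $\Dom U$ to detect a prefix $p$ of $q$ with $p\in\Dom U$ (unique if it exists, since $\Dom U$ is prefix-free); require that the remaining suffix $w$ has length exactly $k+1$, so $q=pw$ (hence $\Dom C$ is prefix-free, being a subset of $\{pw : p\in\Dom U,\ \abs{w}=k+1\}$). Let $s=U(p)$, $n=\abs{s}$, and let $d$ be the dyadic rational $0.s$, so that if $s=\rest{\alpha}{n}$ then $d\le\alpha<d+2^{-n}$. Compute the least $m$ with $a_m>d-2^{-n}$ (this search halts whenever $d-2^{-n}<\alpha$, in particular in the intended case since $d\le\alpha$). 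Then $\alpha-a_m<(d+2^{-n})-(d-2^{-n})=2^{-n+1}$, so $\beta-b_m\le c(\alpha-a_m)<2c\cdot 2^{-n}\le 2^{-n+k}$, while $b_m\le\beta$ (we may take $b_m\ge0$). Letting $e$ be the largest integer multiple of $2^{-n}$ with $e\le b_m$, we get $e\le\beta<e+2^{-n}+2^{-n+k}\le e+2^{-n+k+1}$, so $\rest{\beta}{n}$ is the $n$-bit expansion of $2^{n}(e+j2^{-n})$ for some $j\in\{0,1,\dots,2^{k+1}-1\}$. Finally, let $C(pw)$ output that $n$-bit string for the value $j$ encoded by the $(k+1)$-bit word $w$.

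For the conclusion: take $p$ a shortest $U$-program for $\rest{\alpha}{n}$, so $\abs{p}=H(\rest{\alpha}{n})$ and $U(p)=\rest{\alpha}{n}$, and take $w$ to be the correct $(k+1)$-bit encoding of the index $j$; then $C(pw)=\rest{\beta}{n}$ with $\abs{pw}=H(\rest{\alpha}{n})+k+1$, hence $H_C(\rest{\beta}{n})\le H(\rest{\alpha}{n})+k+1$. Since $H(t)\le H_C(t)+O(1)$ for all $t\in\X$, we obtain $H(\rest{\beta}{n})\le H(\rest{\alpha}{n})+O(1)$ for all $n\in\N^+$, as required. I expect the main obstacle to be purely bookkeeping rather than conceptual: namely, tracking the constants carefully, since domination only gives a \emph{multiplicative} constant $c$ and hence a loss of about $\log_2 c$ bits of precision on $\beta$, which must be absorbed into the fixed-length advice suffix; one must also be slightly careful with the prefix-free parsing of $q$ into $pw$ and with dyadic-boundary edge cases (a real with a terminating expansion), which is why a whole block of $2^{k+1}$ candidate intervals, rather than just two, is used. (This is essentially the argument sketched in Theorem~4.9 of \cite{CHKW01}.)
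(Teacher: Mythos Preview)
Your argument is correct and is precisely the standard Solovay argument: from $\rest{\alpha}{n}$ one effectively locates an index $m$ with $\alpha-a_m<2^{-n+1}$, domination transfers this to $\beta-b_m<2^{k-n}$, and the residual ambiguity of $O(1)$ bits is absorbed into a fixed-length advice suffix, giving a prefix machine $C$ with $H_C(\rest{\beta}{n})\le H(\rest{\alpha}{n})+k+1$. The paper itself does not supply a proof of this theorem; it simply records the statement and refers the reader to Theorem~4.9 of \cite{CHKW01}, which is exactly the argument you have written out (and which you cite), so your proposal matches the paper's intended proof.
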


\begin{definition}[universality]
A computable, increasing and converging sequence $\{a_n\}$
of rationals is called \textit{universal}
if for every computable, increasing and converging sequence $\{b_n\}$
of rationals
there exists $c\in\N^+$ such that
$c(\alpha-a_n)\ge \beta-b_n$ for all $n\in\N$,
where $\alpha=\lim_{n\to\infty} a_n$ and $\beta=\lim_{n\to\infty} b_n$.
\qed
\end{definition}

The previous results on
the equivalent characterizations of randomness for an r.e.~real
are summarized in the following theorem.

\begin{theorem}[\cite{Sch73,C75,Sol75,CHKW01,KS01,T06}]\label{randomness}
Let $\alpha$ be an r.e.~real with $0<\alpha<1$.
Then the following conditions are equivalent:
\begin{enumerate}
\item The real $\alpha$ is weakly Chaitin random.
\item The real $\alpha$ is Martin-L\"{o}f random.
\item The real $\alpha$ is $\Omega$-like.
\item For every r.e.~real $\beta$,
  $H(\rest{\beta}{n})\le H(\rest{\alpha}{n})+O(1)$
  for all $n\in\N^+$.
\item There exists an optimal computer $V$ such that $\alpha=\Omega_V$.
\item There exists a universal probability $m$ such that
  $\alpha=\sum_{s\in\X}m(s)$.
\item Every computable, increasing sequence of rationals
  which converges to $\alpha$ is universal.
\item There exists a universal computable, increasing sequence of
  rationals which converges to $\alpha$.\qed
\end{enumerate}
\end{theorem}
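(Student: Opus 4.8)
\noi\textbf{Proof proposal.}
The plan is to prove the eight conditions equivalent by a web of implications resting on results already recorded: Schnorr's theorem (Theorem~\ref{wcem}), Chaitin's result that $\Omega_V$ is weakly Chaitin random for every optimal computer $V$, Solovay's theorem (Theorem~\ref{Solovay}), Theorem~\ref{eup}, and the Kraft--Chaitin theorem for building prefix-free machines from lower-computable discrete semi-measures. Most of the implications fall out once the right object is substituted. Indeed, (i)$\Leftrightarrow$(ii) is Theorem~\ref{wcem}; (iii)$\Rightarrow$(iv) is Theorem~\ref{Solovay}, since an $\Omega$-like $\alpha$ dominates every r.e.\ $\beta$; (iv)$\Rightarrow$(i) follows by taking $\beta=\Omega_V$ for an optimal $V$, which is an r.e.\ real and is weakly Chaitin random, so $n-c\le H(\rest{\Omega_V}{n})\le H(\rest{\alpha}{n})+O(1)$; and (v)$\Rightarrow$(i) is Chaitin's theorem again. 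Among the sequence conditions, (vii)$\Rightarrow$(viii) is trivial, (viii)$\Rightarrow$(iii) is immediate from the definitions (a universal $\{a_n\}\to\alpha$ and any $\{b_n\}\to\beta$ witness domination), and (iii)$\Rightarrow$(vii) reduces to the lemma that ``$\alpha$ dominates $\beta$'' does not depend on the chosen approximating sequences --- provable by noting that two computable increasing sequences with the same r.e.\ limit majorize each other's tails up to a constant factor together with a shift of index. Finally (v)$\Leftrightarrow$(vi): given optimal $V$, the function $m(s)=\sum\{\,2^{-\abs{p}}\mid p\in\Dom V,\ V(p)=s\,\}$ is a lower-computable semi-measure with $\sum_{s\in\X}m(s)=\Omega_V$, and it is universal because $m(s)\ge 2^{-H_V(s)}=2^{-H(s)+O(1)}$ while $2^{-H(s)}$ is a universal probability up to a constant by Theorem~\ref{eup}; conversely, a universal probability $m$ dominates $2^{-H(s)}$, so a Kraft--Chaitin machine realizing $m$, which has halting probability $\sum_s m(s)$, is optimal.

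This leaves two substantial implications. For (i)/(ii)$\Rightarrow$(iii) I would run the Ku\v{c}era--Slaman argument. Fix a computable increasing $\{a_n\}\to\alpha$ and, for an arbitrary r.e.\ $\beta$ (reduced to $\beta\in(0,1)$ by translation and scaling), a computable increasing $\{b_n\}\to\beta$, and build for each $k\in\N^+$ an r.e.\ family $U_k$ of rational subintervals of $(a_s,\alpha]$: each time $b$ increases from $b_s$ to $b_{s+1}$, put into $U_k$ an interval of length $2^{-k}(b_{s+1}-b_s)$, placed so that intervals contributed at later stages lie above earlier ones but still below $\alpha$. Then $\mu(U_k)\le 2^{-k}(\beta-b_0)\le 2^{-k}$, so $(U_k)_k$ is a Martin-L\"{o}f test, and the placement is arranged so that $\alpha\in U_k$ whenever $\beta-b_n$ exceeds $2^{k}(\alpha-a_n)$ at some $n$; hence Martin-L\"{o}f randomness of $\alpha$ forces some $k$ with $\alpha\notin U_k$, and that $k$ delivers $c=2^k$ with $c(\alpha-a_n)\ge\beta-b_n$ for all $n$, i.e.\ $\alpha$ dominates $\beta$.

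For (i)$\Rightarrow$(v) I would exploit the equivalence (i)$\Leftrightarrow$(iii) just obtained: $\alpha$ dominates $\Omega_U$, so along suitable approximations $\alpha-a_n\ge c^{-1}(\Omega_U-\omega_n)$, i.e.\ the measure still available to an approximation of $\alpha$ always exceeds, up to the fixed factor $c^{-1}$, the measure $U$ still needs. Interleaving the two enumerations and reflecting each $p\in\Dom U$ into the domain of a new prefix-free machine $V$ as a program of length $\abs{p}+O(1)$ --- always possible by the domination inequality --- while padding the remaining budget with useless programs so that $\Omega_V=\alpha$ exactly (starting from $a_0=0$), one obtains a $V$ that simulates $U$ with overhead $O(1)$, hence is optimal; then $\alpha=\Omega_V$ and (v) holds, and with (v)$\Leftrightarrow$(vi) the whole web is closed.

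The main obstacle is the Ku\v{c}era--Slaman step. A naive test recording only $\alpha-a_s\ge 2^{-k}(b_{s+1}-b_s)$ at each stage bounds $\alpha-a_n$ below merely by the largest single increment of $\beta$ past stage $n$, not by the whole tail $\beta-b_n=\sum_{s\ge n}(b_{s+1}-b_s)$; extracting a genuine Martin-L\"{o}f test from which the per-index inequality $c(\alpha-a_n)\ge\beta-b_n$ actually follows requires the careful, telescoping placement of the intervals within $(a_s,\alpha]$ described above, and it is precisely there that the randomness of $\alpha$ enters essentially. A secondary delicate point is verifying the optimality of the machine $V$ in (i)$\Rightarrow$(v), which is exactly where domination of $\Omega_U$ by $\alpha$ --- hence the already-proved (i)$\Rightarrow$(iii) --- is used to keep the simulation overhead bounded.
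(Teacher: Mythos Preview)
The paper does not itself prove Theorem~\ref{randomness}; the paragraph following the statement only records which implication is due to which author, and your web of implications matches those attributions (Schnorr for (i)$\Leftrightarrow$(ii), Solovay for (iii)$\Rightarrow$(iv), Chaitin for (v)$\Rightarrow$(i), Ku\v{c}era--Slaman for the step you package as (ii)$\Rightarrow$(iii), Calude--Hertling--Khoussainov--Wang for (iii)$\Rightarrow$(v), Tadaki for inserting (vi)). At the level of detail the paper offers, your proposal is correct and takes the same route.

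One step is genuinely under-argued. In (vi)$\Rightarrow$(v) you claim that ``a Kraft--Chaitin machine realizing $m$, which has halting probability $\sum_s m(s)$, is optimal'', but a prefix machine $V$ with $\sum_{V(p)=s}2^{-\abs{p}}=m(s)$ for every $s$ need not satisfy $H_V(s)\le -\log_2 m(s)+O(1)$: if the enumeration of $m$ approaches $m(s)$ through many tiny increments, every Kraft--Chaitin request issued for $s$ can be long, and the inequality $m(s)\ge c\,2^{-H(s)}$ is a statement about $m$, not about $H_V$. The standard coding-theorem construction (issue request $(k+c,s)$ the first time the approximation to $m(s)$ exceeds $2^{-k}$) gives optimality, but its halting probability is only within a constant factor of $\alpha$, not equal to it; achieving both simultaneously needs an extra device. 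One clean route that stays inside your framework is to prove (vi)$\Rightarrow$(iii) directly: write an arbitrary r.e.\ $\beta\in(0,1)$ as $\sum_s r(s)$ for a \emph{computable} semi-measure $r$, use universality of $m$ to get $r\le Cm$ with $C$ a power of $2$, and observe that $m-r/C$ is then lower semicomputable with nonnegative values, so $\alpha-\beta/C=\sum_s\bigl(m(s)-r(s)/C\bigr)$ is a nonnegative r.e.\ real and $\alpha=(\alpha-\beta/C)+C^{-1}\beta$ witnesses that $\alpha$ dominates $\beta$; then your CHKW-style (iii)$\Rightarrow$(v) closes the cycle. (A smaller imprecision: in your Ku\v{c}era--Slaman sketch the phrase ``still below $\alpha$'' cannot be enforced during the construction, since $\alpha$ is not known; the intervals are placed with left endpoint $\max(a_s,\text{previous right endpoint})$ and may well overshoot $\alpha$ --- what matters is only that if domination with constant $2^k$ fails at some $n$, the intervals from stage $n$ onward have total length exceeding $\alpha-a_n$ and hence cover $\alpha$.)
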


The historical remark on the proofs of equivalences
in Theorem \ref{randomness} is as follows.
Schnorr \cite{Sch73} showed that
(i) and (ii) are equivalent to each other.
Chaitin \cite{C75} showed that (v) implies (i).
Solovay \cite{Sol75} showed that
(v) implies (iii), (iii) implies (iv), and (iii) implies (i).
Calude, Hertling, Khoussainov, and Wang \cite{CHKW01} showed that
(iii) implies (v), and (v) implies (vii).
Ku\v{c}era and Slaman \cite{KS01} showed that
(ii) implies (vii).
Finally,
(vi) was inserted in the course of the derivation from (v) to (viii)
by Tadaki \cite{T06}.

%%%%%%%%%%%%%%%%%%%%%%%%%%%%%%%%%%%%%%%%%%%%%%%%%%%%%%%%%%%%%%%%%%%%%%%%%%%
\section{New results on the partial randomness of an r.e.~real}
\label{Our results}

%%%%% for compression
\vspace*{-1mm}

In this section,
we generalize
Theorem \ref{randomness} above
over the notion of partial randomness.
For that purpose,
we first introduce some new notions.
Let $T$ be an arbitrary
%computable
real with $0<T\le 1$
throughout
the rest of this paper.
%this section.
These notions are parametrized by the real $T$.%
\footnote{
The parameter $T$ corresponds to the notion of ``temperature''
in the statistical mechanical interpretation of
algorithmic information theory
developed by Tadaki \cite{T08CiE,T09LFCS}.
}

\begin{definition}[\boldmath $T$-convergence]
An increasing sequence $\{a_n\}$ of reals is called
\textit{$T$-convergent} if
%\begin{equation*}
%  \sum_{n=0}^{\infty} (a_{n+1}-a_{n})^T<\infty.
%\end{equation*}
$\sum_{n=0}^{\infty} (a_{n+1}-a_{n})^T\!<\infty$.
An r.e.~real $\alpha$ is called \textit{$T$-convergent} if
there exists a $T$-convergent computable,
increasing sequence of rationals which
converges to $\alpha$,
i.e.,
if there exists an increasing sequence $\{a_n\}$ of rationals
such that
(i) $\{a_n\}$ is $T$-convergent,
(ii) $\{a_n\}$ is computable,
and (iii) $\lim_{n\to\infty} a_n=\alpha$.
\qed
\end{definition}

%%%%% for compression
\vspace*{-2mm}

Note that
every increasing and converging sequence of reals is
$1$-convergent, and thus
every r.e.~real is $1$-convergent.
In general,
based on the following
%proposition,
lemma,
we can freely switch from
``$T$-convergent computable,
increasing sequence of reals''
to
``$T$-convergent computable,
increasing sequence of rationals.''

%\begin{proposition}\label{tcrr}
\begin{lemma}\label{tcrr}
For every $\alpha\in\R$,
$\alpha$ is an r.e.~$T$-convergent real
if and only if
there exists a $T$-convergent computable,
increasing sequence of reals which
converges to $\alpha$.
\qed
%\end{proposition}
\end{lemma}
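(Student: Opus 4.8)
The plan is to prove both directions. The forward direction is trivial: if $\alpha$ is an r.e.~$T$-convergent real, then by definition there is a $T$-convergent computable, increasing sequence of \emph{rationals} converging to $\alpha$, and rationals are in particular reals, so the desired sequence of reals exists. The entire content of the lemma therefore lies in the converse: given a $T$-convergent computable, increasing sequence $\{a_n\}$ of \emph{reals} converging to $\alpha$, I must manufacture a $T$-convergent computable, increasing sequence of \emph{rationals} converging to $\alpha$.

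The approach for the converse is to approximate each $a_n$ from below by a rational, closely enough that both the increasing property and the $T$-convergence are preserved. Since $\{a_n\}$ is computable, fix a total recursive $f\colon\N\times\N\to\Q$ with $\abs{a_n-f(n,m)}<2^{-m}$ for all $n,m$. The idea is to define $b_n$ to be a rational lying in a tiny window just below $a_n$, with the windows shrinking fast enough. Concretely, I would choose, for each $n$, a rational $b_n$ with $a_n-2^{-n-c}< b_n < a_n$ for a suitable constant, e.g.~by taking $b_n=f(n,n+2)-2^{-n-1}$ or similar (one must check this lands in $(a_n-2^{-n},a_n)$). To guarantee strict monotonicity $b_{n+1}>b_n$, observe that $a_{n+1}-a_n>0$; since $\{a_n\}$ is increasing and computable one can, if necessary, pass to a subsequence or refine the approximation precision adaptively so that the gap $a_{n+1}-a_n$ dominates the approximation errors — a standard manoeuvre. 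The sequence $\{b_n\}$ is then computable (a rational-valued total recursive function of $n$), increasing, and converges to $\alpha$ because $b_n\to\alpha$ by squeezing.

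The one genuinely non-routine point — and the step I expect to be the main obstacle — is verifying that $\{b_n\}$ remains $T$-convergent, i.e.~$\sum_n (b_{n+1}-b_n)^T<\infty$, given only $\sum_n (a_{n+1}-a_n)^T<\infty$. The difficulty is that $x\mapsto x^T$ is concave for $T<1$, so it is \emph{sub}additive but not Lipschitz near $0$; a crude bound $b_{n+1}-b_n\le a_{n+1}-a_n+2^{-n}$ followed by $(u+v)^T\le u^T+v^T$ gives $(b_{n+1}-b_n)^T\le (a_{n+1}-a_n)^T+2^{-nT}$, and since $T>0$ the tail $\sum_n 2^{-nT}$ converges (geometric series with ratio $2^{-T}<1$). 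Hence $\sum_n(b_{n+1}-b_n)^T\le\sum_n(a_{n+1}-a_n)^T+\sum_n 2^{-nT}<\infty$. I would make sure the chosen approximation scheme actually yields $b_{n+1}-b_n\le (a_{n+1}-a_n)+2^{-n}$ (or any summable-to-the-$T$ error), which is where a little care in picking the window widths pays off. With monotonicity, computability, convergence, and $T$-convergence of $\{b_n\}$ all established, $\alpha$ is an r.e.~$T$-convergent real, completing the proof. $\qed$
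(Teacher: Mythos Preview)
The paper omits the proof of this lemma (it states that most proofs are deferred to a full version), so there is no original argument to compare against; I can only assess your proposal on its own merits.

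Your approach is sound and would go through. The forward direction is indeed trivial. For the converse, the decisive observation is exactly the one you isolate: for $0<T\le 1$ and $u,v\ge 0$ one has $(u+v)^T\le u^T+v^T$, so an additive perturbation of the increments by a sequence whose $T$-th powers are summable (such as $2^{-n}$) preserves $T$-convergence. This is the heart of the matter and you handle it correctly.

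The one place that deserves slightly more care than ``a standard manoeuvre'' is monotonicity. Putting $b_n$ in a window $(a_n-2^{-n},a_n)$ of \emph{fixed} width does not by itself force $b_{n+1}>b_n$, since $a_{n+1}-a_n$ may be much smaller than $2^{-(n+1)}$. Your adaptive-precision fix is the right one: recursively choose a rational $b_n\in(a_n-\epsilon_n,a_n)$ with $\epsilon_n\le 2^{-n}$ and $\epsilon_n$ also strictly less than the (positive, computable) quantity $a_n-b_{n-1}$. This is effective because $a_n-b_{n-1}$ is a positive computable real, so one can search for a small enough rational $\epsilon_n$. The subsequence idea you mention also works, since by the same subadditivity any subsequence of a $T$-convergent increasing sequence is again $T$-convergent; but the adaptive-precision route is cleaner. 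With that detail filled in, the argument is complete.
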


The following argument illustrates the way of using
%Proposition
Lemma
\ref{tcrr}:
Let $V$ be an optimal computer, and
let $p_0,p_1,p_2,\dots$ be a recursive enumeration of the r.e.~set $\Dom V$.
Then $\Omega_V(T)=\sum_{i=0}^{\infty} 2^{-\abs{p_i}/T}$,
and the increasing sequence
$\left\{\sum_{i=0}^{n} 2^{-\abs{p_i}/T} \right\}_{n\in\N}$
of reals is $T$-convergent
since
%$\Dom V$ is a prefix-free set
$\Omega_V=\sum_{i=0}^{\infty} 2^{-\abs{p_i}}<1$.
If $T$ is computable,
then this sequence of reals is
%also
computable. 
Thus,
by Lemma \ref{tcrr}
we have Theorem \ref{tomegavt} below.

\begin{theorem}\label{tomegavt}
Let $V$ be an optimal computer.
If $T$ is computable,
then $\Omega_V(T)$ is an r.e.~$T$-convergent real.
\qed
\end{theorem}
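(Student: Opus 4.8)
The plan is to verify the three defining conditions of "r.e.\ $T$-convergent real" directly for $\Omega_V(T)$, using the recursive enumeration of $\Dom V$ that the preceding paragraph already sets up. Fix an optimal computer $V$ and a recursive enumeration $p_0,p_1,p_2,\dots$ of the prefix-free r.e.\ set $\Dom V$ (with, say, all $p_i$ distinct). Define the partial sums $a_n=\sum_{i=0}^{n}2^{-\abs{p_i}/T}$. Each $a_n$ is a finite sum, hence a real; since $T$ is computable and $2^{-\abs{p_i}/T}$ can be approximated to within any prescribed $2^{-m}$ uniformly in $i$ and $n$, the sequence $\{a_n\}_{n\in\N}$ is a computable sequence of reals. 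It is increasing because every summand $2^{-\abs{p_i}/T}$ is strictly positive. And by definition $\lim_{n\to\infty}a_n=\sum_{i=0}^{\infty}2^{-\abs{p_i}/T}=\Omega_V(T)$; since $0<T\le 1$, Theorem~\ref{pomgd} (or the remark immediately following its statement) guarantees this sum converges with $0<\Omega_V(T)<1$, so $\Omega_V(T)$ is an r.e.\ real and indeed lies in $(0,1)$.

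The remaining point is $T$-convergence of $\{a_n\}$, i.e.\ $\sum_{n=0}^{\infty}(a_{n+1}-a_n)^T<\infty$. Here the telescoping structure does the work: $a_{n+1}-a_n=2^{-\abs{p_{n+1}}/T}$, so $(a_{n+1}-a_n)^T=\bigl(2^{-\abs{p_{n+1}}/T}\bigr)^T=2^{-\abs{p_{n+1}}}$. Hence
\begin{equation*}
  \sum_{n=0}^{\infty}(a_{n+1}-a_n)^T
  =\sum_{n=0}^{\infty}2^{-\abs{p_{n+1}}}
  \le\sum_{i=0}^{\infty}2^{-\abs{p_i}}
  =\Omega_V<1,
\end{equation*}
where finiteness of $\Omega_V=\sum_{p\in\Dom V}2^{-\abs{p}}$ is exactly the Kraft inequality for the prefix-free set $\Dom V$. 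Thus $\{a_n\}$ is a $T$-convergent, computable, increasing sequence of rationals-or-reals converging to $\Omega_V(T)$.

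Strictly speaking the $a_n$ are reals, not rationals, so to land inside the literal definition of "r.e.\ $T$-convergent real" I would invoke Lemma~\ref{tcrr}, which allows replacing a $T$-convergent computable increasing sequence of reals by one of rationals. Then all three clauses (i)--(iii) of the definition hold, and $\Omega_V(T)$ is an r.e.\ $T$-convergent real, as claimed. I do not anticipate a genuine obstacle: the only mildly delicate spots are (a) checking that $\{a_n\}$ is a computable sequence of reals when $T$ is merely computable rather than rational --- handled by the standard fact that the maps $x\mapsto 2^{-k/x}$ are computable and that finite sums of uniformly computable reals are computable --- and (b) the bookkeeping to pass from reals to rationals, which is precisely what Lemma~\ref{tcrr} is for. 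The key identity $(2^{-m/T})^T=2^{-m}$ is what makes the $T$-convergence collapse to the ordinary Kraft bound $\Omega_V<1$.
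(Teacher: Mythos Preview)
Your argument is correct and follows essentially the same route as the paper: form the partial sums $a_n=\sum_{i\le n}2^{-\abs{p_i}/T}$ from a recursive enumeration of $\Dom V$, observe that $(a_{n+1}-a_n)^T=2^{-\abs{p_{n+1}}}$ so $T$-convergence reduces to $\Omega_V<1$, note computability of $\{a_n\}$ when $T$ is computable, and then invoke Lemma~\ref{tcrr} to pass from reals to rationals. Your write-up is more detailed, but the strategy and the key identity are identical to the paper's.
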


\begin{definition}[\boldmath $\Omega(T)$-likeness]
An r.e.~real $\alpha$ is called \textit{$\Omega(T)$-like}
if it dominates all r.e.~$T$-convergent reals.
\qed
\end{definition}

Note that
an r.e.~real $\alpha$ is $\Omega(1)$-like
if and only if $\alpha$ is $\Omega$-like.

\begin{definition}[\boldmath $T$-universality]
A computable, increasing and converging sequence $\{a_n\}$
of rationals is called \textit{$T$-universal}
if for every $T$-convergent computable,
increasing and converging sequence $\{b_n\}$ of
rationals
there exists $c\in\N^+$ such that
$c(\alpha-a_n)\ge \beta-b_n$ for all $n\in\N$,
where $\alpha=\lim_{n\to\infty} a_n$ and $\beta=\lim_{n\to\infty} b_n$.
\qed
\end{definition}

Note that
a computable, increasing and converging sequence $\{a_n\}$
of rationals is $1$-universal
if and only if $\{a_n\}$ is universal.

Using the notions introduced above,
Theorem~\ref{randomness} is generalized as follows.

\begin{theorem}[main result]\label{partial randomness}
Let $\alpha$ be an r.e.~real with $0<\alpha<1$.
Suppose that $T$ is computable.
Then the following conditions are equivalent:
\begin{enumerate}
\item The real $\alpha$ is weakly Chaitin $T$-random.
\item The real $\alpha$ is Martin-L\"{o}f $T$-random.
\item The real $\alpha$ is $\Omega(T)$-like.
\item For every r.e.~$T$-convergent real $\beta$,
  $H(\rest{\beta}{n})\le H(\rest{\alpha}{n})+O(1)$
  for all $n\in\N^+$.
\item For every r.e.~$T$-convergent real $\gamma>0$,
  there exist an r.e.~real $\beta\ge 0$ and
  a rational $q>0$ such that
  $\alpha=\beta+q\gamma$.
\item For every optimal computer $V$,
  there exist an r.e.~real $\beta\ge 0$ and
  a rational $q>0$ such that
  $\alpha=\beta+q\Omega_V(T)$.
\item There exist
  an optimal computer $V$ and an r.e.~real $\beta\ge 0$
  such that $\alpha=\beta+\Omega_V(T)$.
\item There exists a universal probability $m$ such that
  $\alpha=\sum_{s\in\X}m(s)^{\frac{1}{T}}$.
\item Every computable, increasing sequence of rationals
  which converges to $\alpha$ is $T$-universal.
\item There exists a $T$-universal computable, increasing sequence
  of rationals which converges to $\alpha$.\qed
\end{enumerate}
\end{theorem}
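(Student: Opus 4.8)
The plan is to prove the ten conditions equivalent by establishing a web of implications that closes up into the asserted equivalence; I would first dispose of the conditions not phrased as a decomposition of $\alpha$ --- namely (i)--(iv), (ix), (x) --- and then attach (v)--(viii). Several links are immediate. Since $T$ is computable, (i) $\Leftrightarrow$ (ii) is exactly Theorem~\ref{Twcem}. For (iii) $\Rightarrow$ (iv): being $\Omega(T)$-like, $\alpha$ dominates every r.e.\ $T$-convergent real $\beta$, so Theorem~\ref{Solovay} gives $H(\rest{\beta}{n})\le H(\rest{\alpha}{n})+O(1)$. For (iv) $\Rightarrow$ (i) I would take any optimal computer $V$: by Theorem~\ref{tomegavt} (using that $T$ is computable) $\Omega_V(T)$ is r.e.\ and $T$-convergent, so (iv) applied with $\beta=\Omega_V(T)$, together with the weak Chaitin $T$-randomness of $\Omega_V(T)$ from Theorem~\ref{pomgd}(i), yields $Tn-c\le H(\rest{\Omega_V(T)}{n})\le H(\rest{\alpha}{n})+O(1)$, which is (i). For (iii) $\Rightarrow$ (v), given an r.e.\ $T$-convergent $\gamma>0$, I would use $\Omega(T)$-likeness to obtain computable increasing sequences $\{a_n\}\to\alpha$, $\{c_n\}\to\gamma$ and $c\in\N^+$ with $c(\alpha-a_n)\ge\gamma-c_n$; enlarging $c$ to an integer $c'\ge c$ with also $c'\ge\gamma/\alpha$ and putting $q=1/c'$, each $a_n-qc_n$ lies below $\alpha-q\gamma$, so $\beta:=\alpha-q\gamma$ is the limit of the nondecreasing computable rational running maxima of $a_n-qc_n$, hence r.e., while $q\gamma\le\alpha$ forces $\beta\ge0$; thus $\alpha=\beta+q\gamma$. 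Then (v) $\Rightarrow$ (vi) is simply the instance $\gamma=\Omega_V(T)$, which is a positive r.e.\ $T$-convergent real by Theorem~\ref{tomegavt} and $\Omega_V(T)\le\Omega_V<1$. Finally I would tie (iii), (ix), (x) together through the presentation-independence of the domination relation for r.e.\ reals (equivalently, Solovay reducibility), a fact available from \cite{CHKW01}: $\Omega(T)$-likeness of $\alpha$ then says precisely that every computable increasing sequence converging to $\alpha$ is $T$-universal, giving (iii) $\Rightarrow$ (ix); (ix) $\Rightarrow$ (x) is trivial; and if some such sequence is $T$-universal then $\alpha$ dominates every r.e.\ $T$-convergent real, i.e.\ (x) $\Rightarrow$ (iii).

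The core of the theorem, and the step I expect to be the main obstacle, is (ii) $\Rightarrow$ (iii): that a Martin-L\"{o}f $T$-random r.e.\ real is $\Omega(T)$-like. This is the place where the Ku\v{c}era--Slaman argument must be generalized. I would fix a computable increasing $\{a_n\}\to\alpha$ and, for an r.e.\ $T$-convergent $\gamma$, a $T$-convergent computable increasing $\{b_n\}\to\gamma$, and assume toward a contradiction that $\alpha$ fails to dominate $\gamma$; the goal is to build from these approximations a Martin-L\"{o}f $T$-test catching $\alpha$ at every level. As in the classical construction, to each increment $b_{n+1}-b_n$ of the approximation of $\gamma$ one attaches a short interval near the current approximation of $\alpha$ whose length is proportional to $b_{n+1}-b_n$ (the proportionality constant at level $k$ shrinking with $k$), and covers each such interval by dyadic intervals. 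The point is that the total $2^{-T\abs{\cdot}}$-weight at level $k$ is bounded, up to the covering overhead, by a constant times $2^{-kT}\sum_{n}(b_{n+1}-b_n)^{T}$, and this is \emph{finite} exactly because $\{b_n\}$ is $T$-convergent --- this is where the hypothesis is used, and also where the exponent $T$ comes from, since the $2^{-T\abs{\cdot}}$-mass of an interval of length $\delta$ is of order $\delta^{T}$. Tuning the level-dependent constant then brings this below $2^{-k}$, so one obtains a genuine Martin-L\"{o}f $T$-test; since the failure of domination forces $\alpha$ into the test at every level, this contradicts (ii). The delicate points here will be the covering overhead and the correct handling of the quantifiers in the non-domination hypothesis.

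It remains to fold (vi), (vii), (viii) into the cluster. In the easy direction, (vi) $\Rightarrow$ (i) and (vii) $\Rightarrow$ (i) will both follow from the observation that a decomposition $\alpha=\beta+q\Omega_V(T)$ with $\beta\ge0$ r.e.\ makes $\alpha$ dominate $\Omega_V(T)$ with constant $\lceil 1/q\rceil$, so that Theorems~\ref{Solovay} and~\ref{pomgd}(i) again give weak Chaitin $T$-randomness. For the reverse I would need to realize a member of the cluster in the forms (vii) and (viii), and this is the second technical ingredient: using the Kraft--Chaitin machinery, construct alongside a $T$-universal (equivalently, $\Omega(T)$-dominating) computable increasing approximation of $\alpha$ an optimal computer $V$ with $\Omega_V(T)\le\alpha$ and with $\alpha-\Omega_V(T)$ r.e.\ and nonnegative --- the $T$-universality of the approximation being what keeps the Kraft budget under control during the construction. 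The passage between (vii) and (viii) would go through the a priori probability $\mathbf{m}_V(s)=\sum_{V(p)=s}2^{-\abs{p}}$, which is a universal probability: scaling it down and redistributing the r.e.\ remainder $\beta$ over it so that the total mass stays below $1$ turns $\alpha=\beta+\Omega_V(T)$ into $\alpha=\sum_{s\in\X}m(s)^{1/T}$ for a universal $m$, and conversely; here $0<T\le1$ and $\alpha<1$ are used in the bookkeeping.

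With all of these implications in hand, conditions (i)--(x) are pairwise equivalent, which is Theorem~\ref{partial randomness}.
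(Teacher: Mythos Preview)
Your plan is correct and follows essentially the same route as the paper: the heart of the argument is the generalization of the Ku\v{c}era--Slaman technique to Martin-L\"{o}f $T$-tests, with the $T$-convergence hypothesis supplying exactly the finiteness of $\sum_n (b_{n+1}-b_n)^T$ needed to bound the $2^{-T\lvert\cdot\rvert}$-mass of the cover, and the remaining links are the natural $T$-parametrized versions of the classical arguments from \cite{Sol75,CHKW01,KS01,T06}. The one organizational difference is that the paper isolates the Ku\v{c}era--Slaman step as a standalone lemma (Lemma~\ref{ks}), producing an approximation $\{a_n\}$ of $\alpha$ with $a_{n+1}-a_n>q\,d_n$ termwise, and applies it to prove (ii) $\Rightarrow$ (v) directly rather than passing through (iii) as you do; beyond that single implication the paper defers all details to a full version, so your sketch is in fact more complete than what appears here.
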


We see that
Theorem~\ref{partial randomness} is
a
%massive extension
massive expansion
of Theorem~\ref{Twcem}
%for an r.e.~real $\alpha$.
in the case where the real $\alpha$ is r.e.~with $0<\alpha<1$.
The condition (vii) of Theorem~\ref{partial randomness}
corresponds to the condition (v) of Theorem~\ref{randomness}.
Note, however, that,
in the condition (vii) of Theorem~\ref{partial randomness},
a non-negative r.e.~real $\beta$ is needed.
%This is because,
%in the case of $\beta=0$,
The reason is as follows:
In the case of $\beta=0$,
the possibility that $\alpha$ is weakly Chaitin $T'$-random
with a real $T'>T$
is excluded by
the $T$-compressibility of $\Omega_V(T)$
imposed by Theorem~\ref{pomgd} (i).
%, and this exclusion contradicts
%the condition (i) of Theorem \ref{partial randomness}.
However, this exclusion is inconsistent with
the condition (i) of Theorem~\ref{partial randomness}.

Theorem~\ref{partial randomness} can be  proved
%partially based on the generalization of
by generalizing the proof of Theorem~\ref{randomness}
over the notion of partial randomness.
For example,
using Lemma~\ref{ks} below,
the implication (ii) $\Rightarrow$ (v) of Theorem~\ref{partial randomness}
%can be
is
proved as follows,
in which the notion of $T$-convergence
%for an increasing sequence of reals
plays an important role.

\begin{proof}[of (ii) $\Rightarrow$ (v) of Theorem~\ref{partial randomness}]
Suppose that
$\gamma$ is an arbitrary r.e.~$T$-convergent real
with $\gamma>0$.
Then there exists a $T$-convergent computable,
increasing sequence $\{c_n\}$ of rationals which
converges to $\gamma$.
Since $\gamma>0$,
without loss of generality
we can assume that $c_{0}=0$.
We choose any one rational $\varepsilon>0$
such that
$\sum_{n=0}^{\infty} [\varepsilon (c_{n+1}-c_{n})]^T\le 1$.
Such $\varepsilon$ exists
since the sequence $\{c_n\}$ is $T$-convergent.
Note that the sequence $\{\varepsilon (c_{n+1}-c_{n})\}$
is a computable sequence of positive rationals.
Thus,
since $\alpha$ is a positive r.e.~real and also Martin-L\"{o}f $T$-random
by the assumption,
it follows from Lemma \ref{ks} below
that there exist a computable, increasing sequence $\{a_n\}$
of rationals and a rational $r>0$ such that
$a_{n+1}-a_{n}>r\varepsilon (c_{n+1}-c_{n})$ for every $n\in\N$,
$a_{0}>0$, and
$\alpha=\lim_{n\to\infty}a_{n}$.
We then define a sequence $\{b_n\}$ of positive
%reals
rationals
by $b_n=a_{n+1}-a_{n}-r\varepsilon (c_{n+1}-c_{n})$.
It follows that $\{b_n\}$ is a computable sequence of
rationals and
$\sum_{n=0}^{\infty}b_n$ converges to
$\alpha-a_{0}-r\varepsilon (\gamma-c_{0})$.
Thus we have
$\alpha=a_{0}+\sum_{n=0}^{\infty}b_n+r\varepsilon \gamma$,
where $a_{0}+\sum_{n=0}^{\infty}b_n$ is a positive r.e.~real.
%Hence the result follows.
This completes the proof.
\qed
\end{proof}

\begin{lemma}\label{ks}
Let $\alpha$ be an r.e.~real,
%with $0<\alpha<1$,
and
let $\{d_n\}$ be a computable sequence
of positive rationals such that
$\sum_{n=0}^{\infty} {d_n}^T\le 1$.
If $\alpha$ is Martin-L\"{o}f $T$-random,
then for every $\varepsilon>0$
there exist a computable, increasing sequence $\{a_n\}$
of rationals and a rational $q>0$ such that
$a_{n+1}-a_{n}>qd_{n}$ for every $n\in\N$,
$a_{0}>\alpha-\varepsilon$, and
$\alpha=\lim_{n\to\infty}a_{n}$.
\qed
\end{lemma}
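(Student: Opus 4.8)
The plan is to argue by contraposition: assuming that for some $\varepsilon>0$ there is no pair $(\{a_n\},q)$ as in the conclusion, I will construct a Martin-L\"{o}f $T$-test capturing $\alpha$ at every level, contradicting the Martin-L\"{o}f $T$-randomness of $\alpha$. I first reduce to rational $\varepsilon$ (a witnessing pair for a smaller positive rational $\varepsilon'$ is also one for $\varepsilon$, so it suffices to refute the existence of such a pair for some positive rational $\varepsilon$), and I fix a computable increasing sequence $\{x_k\}$ of rationals with $\lim_k x_k=\alpha$. I also record that $d_n\le 1$ for every $n$, since $d_n^T\le\sum_m d_m^T\le 1$ and $T>0$.

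For each $d\in\N^+$ I run a greedy ``staircase'' process $P_d$ of step size $2^{-d}$: it sets $a_0:=x_d$ and $i_0:=d$, and, having defined $a_n=x_{i_n}$, it searches for the least $k>i_n$ with $x_k>a_n+2^{-d}d_n$ and then puts $a_{n+1}:=x_k$, $i_{n+1}:=k$. Forcing the indices to strictly increase makes $i_n\to\infty$, so if $P_d$ never stalls it produces a computable increasing sequence with $a_{n+1}-a_n>2^{-d}d_n$ and $\lim_n a_n=\alpha$; if moreover $x_d>\alpha-\varepsilon$, this is precisely a pair $(\{a_n\},2^{-d})$ of the kind we assumed cannot exist. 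Hence, letting $M$ be the (non-computable) least index with $x_M>\alpha-\varepsilon$, the process $P_d$ must stall for every $d\ge M$, say while trying to produce $a_{n(d)+1}$. Stalling at step $n(d)$ says exactly that no $x_k$ ever exceeds $a_{n(d)}+2^{-d}d_{n(d)}$, i.e.\ $\alpha\le a_{n(d)}+2^{-d}d_{n(d)}$, while $a_{n(d)}<\alpha$; so $\alpha$ is trapped in the interval $\bigl(a_{n(d)},\,a_{n(d)}+2^{-d}d_{n(d)}\bigr]$, of length $2^{-d}d_{n(d)}$.

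To obtain the test I enumerate, for every $d$ and every step $n$ that $P_d$ actually reaches, the interval $\bigl(a_n,\,a_n+2^{-d}d_n\bigr]$, and let $U_d$ be their union (enumerating also intervals that miss $\alpha$ is harmless). Since $\sum_n(2^{-d}d_n)^T=2^{-dT}\sum_n d_n^T\le 2^{-dT}$, the sum of $(\text{length})^T$ over the intervals making up $U_d$ is at most $2^{-dT}$, uniformly in $d$, while $\alpha\in U_d$ for every $d\ge M$. Hence the uniformly r.e.\ sets $W_D:=\bigcup_{d\ge D}U_d$ contain $\alpha$ for every $D$ and their intervals satisfy $\sum(\text{length})^T\le\sum_{d\ge D}2^{-dT}\le c\,2^{-DT}$ with $c=(1-2^{-T})^{-1}$. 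Covering each enumerated interval of length $\ell$ by at most two dyadic cylinders, for which $\sum 2^{-T\abs{s}}\le 4\ell^T$, and taking $\mathcal{C}_n$ to be the cylinders so obtained from $W_{D(n)}$ with $D(n)$ chosen computably in $n$ so that $4c\,2^{-D(n)T}\le 2^{-n}$ (this uses the computability of $T$, available in every application of the lemma), I obtain a subset $\mathcal{C}$ of $\N^+\times\X$ that is a Martin-L\"{o}f $T$-test, its level-$n$ weight being at most $2^{-n}$, and that captures $\alpha$ at every level --- the desired contradiction.

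The step I expect to be the real obstacle is the design of the family $\{P_d\}$, whose three requirements pull against one another: the staircases must converge to $\alpha$ and not to some smaller limit (handled by the monotone advance of the approximation indices $i_n$); their starting points $x_d$ must eventually exceed $\alpha-\varepsilon$ although the threshold $M$ is not computable (handled by letting the starting index grow with $d$); and the union of all intervals they spawn must have finite and controllably small $\sum(\text{length})^T$ (handled by shrinking the step size $2^{-d}$, which is exactly where the hypothesis $\sum_n d_n^T\le 1$ is spent, matched to the $2^{-T\abs{s}}$ weighting in the definition of a Martin-L\"{o}f $T$-test). Reconciling all three through the single parameter $d$ and then passing to the tails $W_D$ is the crux; the remainder is the routine translation between c.e.\ covers by rational intervals of small $\sum(\text{length})^T$ and Martin-L\"{o}f $T$-tests.
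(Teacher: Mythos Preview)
Your proof is correct and follows the approach the paper indicates. The paper does not spell out a proof but states that the lemma follows by generalizing the proof of Theorem~2.1 of Ku\v{c}era and Slaman to partial randomness; your staircase processes $P_d$ with step sizes $2^{-d}d_n$, the enumeration of the candidate intervals into covers of $T$-weight at most $2^{-dT}$, and the conversion into a Martin-L\"{o}f $T$-test are precisely that generalization.
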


Lemma~\ref{ks} can be proved,
based on the generalization of the techniques used
in the proof of Theorem 2.1 of Ku\v{c}era and Slaman \cite{KS01}
over partial randomness.
In addition to the proof of Lemma~\ref{ks},
the complete proof of Theorem~\ref{partial randomness}
will be described in a full version of this paper,
which is in preparation.

Theorem~\ref{partial randomness} has many important applications.
One of the main applications is
to give many characterizations of the dimension of
an individual r.e.~real,
some of
which will be presented in the next section.
As another consequence of Theorem~\ref{partial randomness},
we can obtain Corollary~\ref{omegat-omegat} below
for example,
which follows immediately from
the implication (vii) $\Rightarrow$ (iv) of
Theorem~\ref{partial randomness}
and Theorem \ref{tomegavt}.

\begin{corollary}\label{omegat-omegat}
Suppose that $T$ is computable.
Then, for every two optimal computers $V$ and $W$,
$H(\rest{\Omega_{V}(T)}{n})=H(\rest{\Omega_{W}(T)}{n})+O(1)$
for all $n\in\N^+$.
\qed
\end{corollary}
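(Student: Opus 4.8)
The plan is to apply Theorem~\ref{partial randomness} with $\alpha=\Omega_V(T)$ and to extract condition~(iv) from condition~(vii), noting that~(vii) holds for $\Omega_V(T)$ for a trivial reason. First I would record the two facts needed to make Theorem~\ref{partial randomness} applicable to $\alpha=\Omega_V(T)$: since $0<T\le 1$ we have $0<\Omega_V(T)<1$ (and likewise $0<\Omega_W(T)<1$) by the remark following the definition of the generalized halting probability, and since $T$ is computable, both $\Omega_V(T)$ and $\Omega_W(T)$ are r.e.~$T$-convergent reals by Theorem~\ref{tomegavt}. Hence $\Omega_V(T)$ is an r.e.~real with $0<\Omega_V(T)<1$, and similarly for $W$.

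Next, I would verify that condition~(vii) of Theorem~\ref{partial randomness} holds for $\alpha=\Omega_V(T)$: taking the optimal computer in~(vii) to be $V$ itself and $\beta=0$ (a non-negative r.e.~real), we have $\Omega_V(T)=\beta+\Omega_V(T)$, so~(vii) is satisfied. By the implication (vii)~$\Rightarrow$~(iv) of Theorem~\ref{partial randomness}, condition~(iv) therefore holds for $\alpha=\Omega_V(T)$, i.e., $H(\rest{\beta}{n})\le H(\rest{\Omega_V(T)}{n})+O(1)$ for every r.e.~$T$-convergent real $\beta$ and all $n\in\N^+$. Applying this with $\beta=\Omega_W(T)$ --- which is r.e.~$T$-convergent by the first step --- gives $H(\rest{\Omega_W(T)}{n})\le H(\rest{\Omega_V(T)}{n})+O(1)$ for all $n\in\N^+$.

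Finally, exchanging the roles of $V$ and $W$ in the same argument yields the reverse inequality $H(\rest{\Omega_V(T)}{n})\le H(\rest{\Omega_W(T)}{n})+O(1)$, and combining the two inequalities gives the claimed equality up to an additive constant. There is essentially no obstacle here: the only point requiring attention is to confirm that the hypotheses of Theorem~\ref{partial randomness} are genuinely in force for $\Omega_V(T)$ and $\Omega_W(T)$ --- namely r.e.-ness, the bounds $0<\,\cdot\,<1$, $T$-convergence, and that condition~(vii) is met --- after which the statement is immediate from the main theorem and Theorem~\ref{tomegavt}.
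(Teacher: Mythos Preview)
Your proof is correct and follows exactly the approach indicated in the paper: the corollary is stated there as an immediate consequence of the implication (vii)~$\Rightarrow$~(iv) of Theorem~\ref{partial randomness} together with Theorem~\ref{tomegavt}. You have simply spelled out the details the paper leaves implicit---verifying the hypotheses $0<\Omega_V(T)<1$ and r.e.-ness, noting that (vii) holds trivially with $\beta=0$, and applying symmetry in $V$ and $W$.
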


Note that
the computability of $T$ is important
for Theorem~\ref{partial randomness} to hold.
For example,
we cannot allow $T$ to be simply an r.e.~real
in Theorem~\ref{partial randomness}.

The notion of $T$-convergence has many interesting properties,
in addition to the properties which we saw above.
In Section~\ref{T-convergence},
we investigate further properties of the notion of $T$-convergence.

%%%% for compression
\vspace*{-1mm}

%%%%%%%%%%%%%%%%%%%%%%%%%%%%%%%%%%%%%%%%%%%%%%%%%%%%%%%%%%%%%%%%%%%%%%%%%%
\section{New characterizations of the dimension of an r.e.~real}
\label{dimension}

In this section we apply Theorem~\ref{partial randomness} to
give many characterizations of dimension for an individual r.e.~real.
In the works \cite{T99,T02},
we introduced
the notions of six ``algorithmic dimensions'',
1st, 2nd, 3rd, 4th, upper, and lower algorithmic
dimensions as fractal dimensions for a subset $F$ of
$N$-dimensional Euclidean space $\R^N$.
These notions are defined
based on the notion of partial randomness and compression rate
by means of program-size complexity.
We then showed that
all the six algorithmic dimensions equal to
the Hausdorff dimension for any self-similar set which
is computable in a certain sense.
The class of such self-similar sets includes familiar fractal sets
such as the Cantor set, von Koch curve, and Sierpi\'nski gasket.
In particular,
the notion of lower algorithmic dimension
for a subset $F$ of $\R$
is defined as follows.

\begin{definition}[lower algorithmic dimension, Tadaki \cite{T02}]
Let $F$ be a nonempty subset of $\R$.
The lower algorithmic dimension $\lad F$ of $F$ is defined by
$\lad F
=
\sup_{x\in F}\liminf_{n \to \infty} H(\rest{x}{n})/n$.
\qed
\end{definition}
Thus,
for every $\alpha\in\R$,
%%%% for compression
\vspace*{-2mm}
\begin{equation}\label{lads}
  \lad\{\alpha\}=\liminf_{n \to \infty} \frac{H(\rest{\alpha}{n})}{n}.
\end{equation}

Independently of us,
Lutz \cite{Lutz00} introduced the notion of constructive dimension of
an individual real $\alpha$
using the notion of lower semicomputable $s$-supergale
with $s\in[0,\infty)$,
and then Mayordomo \cite{Mayor02} showed that,
for every real $\alpha$,
the constructive dimension of $\alpha$ equals to
the right-hand side of \eqref{lads}.
Thus, the constructive dimension of $\alpha$
is precisely the lower algorithmic dimension $\lad\{\alpha\}$ of
$\alpha$
for every real $\alpha$.

Using Lemma~\ref{sTr-iff-wtr} below,
we can convert each of
all the conditions
in Theorem~\ref{partial randomness}
into a characterization of
the lower algorithmic dimension $\lads{\alpha}$
for any r.e.~real $\alpha$.

\begin{lemma}\label{sTr-iff-wtr}
Let $\alpha\in\R$.
For every $t\in[0,\infty)$,
$\alpha$ is weakly Chaitin $t$-random
if $t<\lad\{\alpha\}$,
and
$\alpha$ is not weakly Chaitin $t$-random
if $t>\lad\{\alpha\}$.
\end{lemma}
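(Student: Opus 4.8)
The plan is to derive both halves directly from the definition of $\liminf$, the definition of weak Chaitin $t$-randomness, and the trivial bound $H(s)\ge 0$ for every $s\in\X$. Throughout, write $L=\lad\{\alpha\}=\liminf_{n\to\infty}H(\rest{\alpha}{n})/n$ by \eqref{lads}.

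First I would handle the case $t<L$. Fix a real $t'$ with $t<t'<L$. Since $\liminf_{n\to\infty}H(\rest{\alpha}{n})/n=L>t'$, there is an $N\in\N^+$ with $H(\rest{\alpha}{n})>t'n$ for all $n\ge N$; as $t'>t$ this gives $H(\rest{\alpha}{n})>t'n\ge tn\ge tn-c$ for every $n\ge N$ and every $c\ge 0$. For the finitely many $n$ with $1\le n<N$ it suffices to have $tn-c\le 0\le H(\rest{\alpha}{n})$, which holds whenever $c\ge tN$. Taking $c=\lceil tN\rceil\in\N$ therefore yields $tn-c\le H(\rest{\alpha}{n})$ for all $n\in\N^+$, i.e.\ $\alpha$ is weakly Chaitin $t$-random.

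Next I would handle $t>L$. Fix a real $t'$ with $L<t'<t$. Since $\liminf_{n\to\infty}H(\rest{\alpha}{n})/n=L<t'$, there are infinitely many $n$ with $H(\rest{\alpha}{n})<t'n$. If $\alpha$ were weakly Chaitin $t$-random there would be $c\in\N$ with $tn-c\le H(\rest{\alpha}{n})$ for all $n\in\N^+$; combining this with the previous inequality for such $n$ gives $tn-c<t'n$, hence $n<c/(t-t')$. Since $c/(t-t')$ is a fixed finite bound while the set of such $n$ is infinite, this is a contradiction, so $\alpha$ is not weakly Chaitin $t$-random.

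There is no serious obstacle here; the whole argument is a routine unwinding of the $\liminf$, and the only subtlety is keeping the additive constant $c$ a natural number while absorbing the finitely many small indices. The one point worth flagging is that the statement is deliberately silent about the borderline value $t=\lad\{\alpha\}$, where weak Chaitin $t$-randomness may hold or fail; this is exactly the form needed afterwards, since the characterizations of $\lads{\alpha}$ will be obtained by letting a computable $t$ approach $\lads{\alpha}$ strictly from below and strictly from above and invoking Theorem~\ref{partial randomness} for each such $t$.
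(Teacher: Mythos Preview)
Your proof is correct and follows essentially the same approach as the paper: both arguments are a direct unwinding of the definition of $\liminf$, with the paper phrasing the first half via the $o(n)$ formulation $\lads{\alpha}\,n\le H(\rest{\alpha}{n})+o(n)$ and the second half as the contrapositive $t\le\lads{\alpha}$, while you use an explicit intermediate value $t'$ and an $N$ in each case. The only difference is cosmetic, and your extra care in choosing $c\in\N$ to cover the finitely many small indices is a nice touch that the paper leaves implicit.
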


\begin{proof}
Let $\alpha\in\R$,
and let $t\in[0,\infty)$.
Assume first that $t<\lad\{\alpha\}$.
Then, since $\lads{\alpha}n\le H(\rest{\alpha}{n})+o(n)$ for all $n\in\N^+$,
we see that
\begin{equation*}
  tn+\left(\lads{\alpha}-t-\frac{o(n)}{n}\right)n
  \le \lads{\alpha}n-o(n)\le H(\rest{\alpha}{n})
\end{equation*}
for all $n\in\N^+$.
Thus,
since $\lads{\alpha}-t-o(n)/n>0$ for all sufficiently large $n$,
we see that $\alpha$ is weakly Chaitin $t$-random.

On the other hand,
assume that $\alpha$ is weakly Chaitin $t$-random.
Then we see that
$t\le \liminf_{n \to \infty} H(\rest{\alpha}{n})/n=\lads{\alpha}$.
Thus,
if $t>\lad\{\alpha\}$ then $\alpha$ is not weakly Chaitin $t$-random.
This completes the proof.
\qed
\end{proof}

For example,
using Lemma~\ref{sTr-iff-wtr},
the condition (iii) in Theorem~\ref{partial randomness}
is converted as follows.
In this paper,
we interpret the supremum $\sup\emptyset$ of the empty set
as $0$.

\begin{theorem}\label{liminf-otl}
Let $\alpha$ be an r.e.~real.
Then,
for every $t\in(0,1]$,
$\alpha$ is $\Omega(t)$-like if $t<\lads{\alpha}$,
and
$\alpha$ is not $\Omega(t)$-like if $t>\lads{\alpha}$.
Thus,
\begin{equation*}
  \lads{\alpha}
  =
  \sup\{\,t\in(0,1]\mid
  \text{$\alpha$ is $\Omega(t)$-like}\,\}.
\end{equation*}
%%%%% for compression
\vspace*{-12.0mm}\\
\qed
\end{theorem}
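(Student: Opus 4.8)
The plan is to derive Theorem~\ref{liminf-otl} directly from the equivalence (i) $\Leftrightarrow$ (iii) of Theorem~\ref{partial randomness}, combined with Lemma~\ref{sTr-iff-wtr}, which converts statements about weak Chaitin $t$-randomness into statements about $\lads{\alpha}$. There is one technical wrinkle: Theorem~\ref{partial randomness} requires $T$ to be \emph{computable}, whereas here $t$ ranges over all of $(0,1]$, including non-computable reals. So the argument must proceed by approximating an arbitrary $t$ by a computable one, which is where the bulk of the work lies.

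First I would prove the first assertion: if $t<\lads{\alpha}$, then $\alpha$ is $\Omega(t)$-like. Pick a computable (in fact rational) $s$ with $t<s<\lads{\alpha}$. By Lemma~\ref{sTr-iff-wtr}, $\alpha$ is weakly Chaitin $s$-random, so by (i) $\Rightarrow$ (iii) of Theorem~\ref{partial randomness} (applicable since $s$ is computable and $0<\alpha<1$; note $\alpha$ r.e.~with $\lads{\alpha}>0$ forces $\alpha\notin\Q$, so we may assume $0<\alpha<1$ after subtracting $\lfloor\alpha\rfloor$, which affects none of the relevant notions), $\alpha$ is $\Omega(s)$-like, i.e.~$\alpha$ dominates every r.e.~$s$-convergent real. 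The key observation is then that every r.e.~$t$-convergent real is also $s$-convergent whenever $s>t$: if $\{a_n\}$ is an increasing sequence of rationals with $\sum (a_{n+1}-a_n)^t<\infty$, then the terms $a_{n+1}-a_n$ tend to $0$, so eventually $(a_{n+1}-a_n)^s\le(a_{n+1}-a_n)^t$, giving $\sum (a_{n+1}-a_n)^s<\infty$. Hence every r.e.~$t$-convergent real is $s$-convergent, so $\alpha$ dominates it, and therefore $\alpha$ is $\Omega(t)$-like.

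Next the second assertion: if $t>\lads{\alpha}$, then $\alpha$ is not $\Omega(t)$-like. Suppose for contradiction that $\alpha$ is $\Omega(t)$-like. Pick a computable $s$ with $\lads{\alpha}<s<t$. Then $\alpha$ dominates every r.e.~$t$-convergent real; in particular, taking an optimal computer $V$, by Theorem~\ref{tomegavt} the real $\Omega_V(s)$ is r.e.~and $s$-convergent, and by the same monotonicity of convergence exponents it is also $t$-convergent (since $t>s$). Hence $\alpha$ dominates $\Omega_V(s)$, so by Theorem~\ref{Solovay} we get $H(\rest{\Omega_V(s)}{n})\le H(\rest{\alpha}{n})+O(1)$. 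But $\Omega_V(s)$ is weakly Chaitin $s$-random by Theorem~\ref{pomgd}(i), so $sn-c\le H(\rest{\Omega_V(s)}{n})\le H(\rest{\alpha}{n})+O(1)$, making $\alpha$ weakly Chaitin $s$-random; by Lemma~\ref{sTr-iff-wtr} this forces $s\le\lads{\alpha}$, contradicting $s>\lads{\alpha}$.

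Finally the displayed equation follows formally: writing $D=\sup\{\,t\in(0,1]\mid\text{$\alpha$ is $\Omega(t)$-like}\,\}$, the first assertion shows every $t\in(0,1]$ with $t<\lads{\alpha}$ lies in the set, so $D\ge\min\{1,\lads{\alpha}\}$; since $\lads{\alpha}\le1$ always, $D\ge\lads{\alpha}$. The second assertion shows no $t>\lads{\alpha}$ lies in the set, so $D\le\lads{\alpha}$. (When $\lads{\alpha}=0$ the set is empty and the convention $\sup\emptyset=0$ gives $D=0=\lads{\alpha}$.) Hence $D=\lads{\alpha}$. The main obstacle is the bookkeeping around computability of the parameter: one cannot feed $t$ itself into Theorem~\ref{partial randomness}, and the whole argument hinges on sandwiching $t$ between a computable $s$ and $\lads{\alpha}$ on the appropriate side and exploiting that $t$-convergence implies $s$-convergence for $s>t$. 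Everything else is a routine application of the already-established equivalences and of Solovay's domination theorem.
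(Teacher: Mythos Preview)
Your argument is correct. The paper omits the proof of Theorem~\ref{liminf-otl}, merely indicating that it is obtained by converting condition~(iii) of Theorem~\ref{partial randomness} via Lemma~\ref{sTr-iff-wtr}; you carry this out, and in particular you correctly isolate the one point the paper leaves implicit, namely that Theorem~\ref{partial randomness} only applies to \emph{computable} parameters while Theorem~\ref{liminf-otl} is stated for arbitrary $t\in(0,1]$. Your device of sandwiching $t$ by a rational $s$ and using the elementary monotonicity ``$s$-convergent $\Rightarrow$ $t$-convergent whenever $s<t$'' (equivalently, $\Omega(t)$-like $\Rightarrow$ $\Omega(s)$-like) is exactly what is needed to bridge this gap, and it explains why Theorem~\ref{liminf-otl} can drop the restriction $t\in\Rc$ that Theorem~\ref{liminf-upt} retains.

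One small remark on the second assertion: your route through $\Omega_V(s)$ and Theorem~\ref{Solovay} is valid, but it reproves by hand a special case of the implication (iii)~$\Rightarrow$~(i). A shorter path, more in line with the paper's stated method, is symmetric to your first assertion: if $\alpha$ is $\Omega(t)$-like and $\lads{\alpha}<s<t$ with $s$ computable, then since every r.e.~$s$-convergent real is $t$-convergent, $\alpha$ is already $\Omega(s)$-like; now (iii)~$\Rightarrow$~(i) of Theorem~\ref{partial randomness} gives weak Chaitin $s$-randomness of $\alpha$, contradicting Lemma~\ref{sTr-iff-wtr}. Either way the conclusion is the same.
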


On the other hand,
the condition (viii) in Theorem~\ref{partial randomness}
is converted as follows,
using Lemma~\ref{sTr-iff-wtr}.
Here $\Rc$ denotes the set of all computable reals.

\begin{theorem}\label{liminf-upt}
Let $\alpha$ be an r.e.~real with $0<\alpha<1$,
Then,
for every $t\in(0,1]\cap\Rc$,
if $t<\lads{\alpha}$ then
$\alpha=\sum_{s\in\X}m(s)^{\frac{1}{t}}$
for some universal probability $m$,
and if $t>\lads{\alpha}$ then
$\alpha\neq\sum_{s\in\X}m(s)^{\frac{1}{t}}$
for any universal probability $m$.
Thus,
$\lad\{\alpha\}=\sup S$,
where $S$ is the set of all $t\in(0,1]\cap\Rc$ such that
$\alpha=\sum_{s\in\X}m(s)^{\frac{1}{t}}$
for some universal probability $m$.
\qed
\end{theorem}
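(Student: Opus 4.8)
The plan is to read off Theorem~\ref{liminf-upt} from the equivalence (i)~$\Leftrightarrow$~(viii) of Theorem~\ref{partial randomness} combined with Lemma~\ref{sTr-iff-wtr}, in exactly the same way that Theorem~\ref{liminf-otl} was obtained from the equivalence (i)~$\Leftrightarrow$~(iii). Fix $t\in(0,1]\cap\Rc$. Since $t$ is computable and $\alpha$ is an r.e.~real with $0<\alpha<1$, Theorem~\ref{partial randomness} applies with $T=t$, so $\alpha$ is weakly Chaitin $t$-random if and only if $\alpha=\sum_{s\in\X}m(s)^{1/t}$ for some universal probability $m$, i.e.~if and only if $t\in S$.

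Now I would split into two cases. If $t<\lads{\alpha}$, then Lemma~\ref{sTr-iff-wtr} gives that $\alpha$ is weakly Chaitin $t$-random, hence $t\in S$, which is the first assertion of the theorem. If $t>\lads{\alpha}$, then Lemma~\ref{sTr-iff-wtr} gives that $\alpha$ is not weakly Chaitin $t$-random, hence $t\notin S$, which is the second assertion.

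It remains to prove $\lad\{\alpha\}=\sup S$. From the case $t>\lads{\alpha}$ above, no $t\in(0,1]\cap\Rc$ with $t>\lads{\alpha}$ belongs to $S$, so every element of $S$ is at most $\lads{\alpha}$ and thus $\sup S\le\lads{\alpha}$. For the reverse inequality, first note $\lads{\alpha}\in[0,1]$, since $H(\rest{\alpha}{n})\le n+O(\log n)$ for every $\alpha$. If $\lads{\alpha}=0$ then $S=\emptyset$ and, by the convention $\sup\emptyset=0$, we get $\sup S=0=\lads{\alpha}$. If $\lads{\alpha}>0$, then for every rational $t$ with $0<t<\lads{\alpha}$ — such rationals exist, are computable, and lie in $(0,1]$ because $\lads{\alpha}\le 1$ — the case $t<\lads{\alpha}$ gives $t\in S$; as these rationals are dense immediately below $\lads{\alpha}$, we obtain $\sup S\ge\lads{\alpha}$. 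Combining the two inequalities yields $\sup S=\lads{\alpha}$, as required.

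The argument is entirely routine once Theorem~\ref{partial randomness} and Lemma~\ref{sTr-iff-wtr} are available; no serious obstacle arises. The only points demanding a little care are the boundary value $t=\lads{\alpha}$, about which Lemma~\ref{sTr-iff-wtr} says nothing, and the degenerate case $\lads{\alpha}=0$; both are disposed of by the density of the rationals in $(0,1]\cap\Rc$ and by the stated convention $\sup\emptyset=0$, precisely as in the proof of Theorem~\ref{liminf-otl}.
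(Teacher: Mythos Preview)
Your proposal is correct and follows exactly the approach indicated in the paper, namely combining Lemma~\ref{sTr-iff-wtr} with the equivalence (i)~$\Leftrightarrow$~(viii) of Theorem~\ref{partial randomness} in the same manner as for Theorem~\ref{liminf-otl}. The care you take with the boundary case $t=\lads{\alpha}$, the degenerate case $\lads{\alpha}=0$, and the density of rationals in $(0,1]\cap\Rc$ is appropriate and fills in the details the paper leaves implicit.
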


In the same manner,
using Lemma~\ref{sTr-iff-wtr}
we can
convert each of the remaining eight conditions
in Theorem~\ref{partial randomness}
also
into a characterization of the lower algorithmic dimension
of an r.e.~real.
In a full version of this paper,
we will describe
the complete list of the ten characterizations
of the lower algorithmic dimension
obtained from Theorem~\ref{partial randomness}.

%%%%%%%%%%%%%%%%%%%%%%%%%%%%%%%%%%%%%%%%%%%%%%%%%%%%%%%%%%%%%%%%%%%%%%%%%%
\section{Further properties of $T$-convergence}
\label{T-convergence}

In this section,
%we investigate further properties of the notion of $T$-convergence.
we investigate further properties of the notion of $T$-convergence.
%First,
%as one of the applications of Theorem \ref{partial randomness},
%we can show the following.
First,
as one of the applications of Theorem~\ref{partial randomness},
the following theorem can be obtained.
%first.

\begin{theorem}\label{T-convergent->T-compressible}
Suppose that $T$ is computable.
For every r.e.~real $\alpha$,
if $\alpha$ is $T$-convergent,
then $\alpha$ is $T$-compressible. 
\end{theorem}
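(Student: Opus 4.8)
The plan is to prove the contrapositive-flavoured statement directly by exhibiting, for a $T$-convergent r.e.\ real $\alpha$, a computer (or a semi-measure) witnessing $H(\rest{\alpha}{n})\le Tn+o(n)$. First I would invoke Lemma~\ref{tcrr} to fix a $T$-convergent computable, increasing sequence $\{a_n\}$ of rationals with $\lim_{n\to\infty}a_n=\alpha$, so that $\sum_{n=0}^{\infty}(a_{n+1}-a_n)^T<\infty$. By rescaling (replacing $a_n$ by $\varepsilon a_n + \text{const}$ for a suitable rational $\varepsilon>0$, exactly as in the proof of (ii)$\Rightarrow$(v) above), I may assume $a_0=0$, $\alpha<1$, and $\sum_{n=0}^{\infty}(a_{n+1}-a_n)^T\le 1$. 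Now each gap $\delta_n \df a_{n+1}-a_n$ is a positive rational, the sequence $\{\delta_n\}$ is computable, and $\sum_n \delta_n^{\,T}\le 1$.

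The heart of the argument is a covering/description scheme. For each $n$, the interval $[a_n,a_{n+1})$ has length $\delta_n$, so it is covered by at most $\lceil \delta_n 2^{k}\rceil + 1$ dyadic intervals of level $k$ for any $k$; choosing for each $n$ the level $k_n$ with $2^{-k_n-1}\le \delta_n < 2^{-k_n}$ (roughly $k_n = \lceil -\log_2 \delta_n\rceil$), the interval $[a_n,a_{n+1})$ meets at most a constant number of dyadic intervals of level $k_n$, each of which has measure $2^{-k_n}\le 2\delta_n$. Since $\alpha=\sup_n a_n$ and $\alpha\notin\{a_n\}$ (or handle that degenerate case separately, as $\alpha$ would then be rational hence computable hence $0$-compressible), for every $n$ we have $a_n\le\alpha<a_{n+1}$ for some index, so $\rest{\alpha}{k_n}$ is one of those finitely many level-$k_n$ strings. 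This produces an r.e.\ prefix set, or more cleanly a lower-computable semi-measure $r$, with $r(\rest{\alpha}{k_n}) \gtrsim 2^{-k_n}$ up to a constant and with total weight controlled: the key estimate is $\sum_n (2^{-k_n})^{T}\le \sum_n (2\delta_n)^{T} = 2^{T}\sum_n \delta_n^{\,T} \le 2^{T}<\infty$. After normalising, one gets a lower-computable semi-measure (in the $T$-scaled sense) assigning weight $\ge c\, 2^{-k_n}$ to $\rest{\alpha}{k_n}$; applying the generalization of Theorem~\ref{eup} / the universal-probability machinery at temperature $T$ yields $H(\rest{\alpha}{k_n})\le T k_n + O(1)$ along the subsequence $\{k_n\}$.

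To pass from this subsequence bound to the full $\limsup$ bound, I would observe that the $k_n$ are eventually increasing (since $\delta_n\to 0$ along a subsequence, and in fact $\sum\delta_n^T<\infty$ forces $\delta_n\to 0$), but they may have gaps; for an arbitrary $m$ pick $n$ with $k_n\le m < k_{n+1}$ and extend a shortest program for $\rest{\alpha}{k_n}$ by the extra $m-k_n$ bits, giving $H(\rest{\alpha}{m})\le T k_n + (m-k_n) + O(\log(m-k_n)) + O(1)$. This is $\le Tm + (1-T)(m-k_n) + O(\log m)$, which is only $Tm+o(m)$ if the gaps $m-k_n$ are $o(m)$ — not automatic. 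The cleanest fix, and the one I expect to actually use, is to bypass this difficulty entirely by appealing to Theorem~\ref{partial randomness} as the section explicitly invites: it suffices to show $\alpha$ is \emph{not} weakly Chaitin $T'$-random for any computable $T'>T$, equivalently (via Lemma~\ref{sTr-iff-wtr}) that $\lads{\alpha}\le T$. For a computable $T'$ with $T<T'\le 1$, the real $\alpha$ is r.e.\ and $T$-convergent hence $T'$-convergent (since $\delta_n<1$ eventually gives $\delta_n^{T'}\le\delta_n^{T}$), so $\alpha$ is $\Omega(T')$-like by... no — rather, $\alpha$ being $T'$-convergent and weakly Chaitin $T'$-random would contradict the exclusion discussed after Theorem~\ref{partial randomness}: condition (vii) with $\beta=0$ would force $\alpha$ to behave like $\Omega_V(T')$, which is $T'$-compressible, and this is incompatible with being $T''$-random for $T''>T'$. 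Pinning down this last incompatibility rigorously — showing a $T'$-convergent r.e.\ real cannot be weakly Chaitin $T'$-random-and-then-some — is the main obstacle, and I would handle it by the direct semi-measure construction of the previous paragraph applied at temperature $T'$: it shows $H(\rest{\alpha}{k_n})\le T'k_n+O(1)$ infinitely often, hence $\liminf_n H(\rest{\alpha}{n})/n\le T'$, hence $\lads{\alpha}\le T'$; letting $T'\downarrow T$ through computable values gives $\lads{\alpha}\le T$, i.e.\ $\alpha$ is $T$-compressible, which is the claim.
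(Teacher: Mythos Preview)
Your proposal has a genuine gap at the final step. You conclude $\lads{\alpha}\le T$ and then write ``i.e.\ $\alpha$ is $T$-compressible,'' but these are not the same: by definition $\lads{\alpha}=\liminf_{n\to\infty}H(\rest{\alpha}{n})/n$, whereas $T$-compressibility is the condition $\limsup_{n\to\infty}H(\rest{\alpha}{n})/n\le T$. Your semi-measure construction, even granting it works, only yields $H(\rest{\alpha}{k_n})\le T'k_n+O(1)$ along a subsequence, hence only a $\liminf$ bound; the passage to the full $\limsup$ is exactly the obstacle you yourself flagged and then tried to sidestep, and the sidestep lands on the wrong quantity. (There is also a smaller slip earlier: since $\{a_n\}$ is strictly increasing with limit $\alpha$, one has $a_{n+1}<\alpha$ for every $n$, so $\alpha$ never lies in $[a_n,a_{n+1})$; your covering produces descriptions of the approximants $a_n$, not of prefixes of $\alpha$, and an extra argument is needed to connect the two.)

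The paper's proof avoids all of this by invoking Theorem~\ref{partial randomness} in a different and much shorter way. The real $\Omega_U(T)$ trivially satisfies condition~(vii) of that theorem (take $\beta=0$, $V=U$), so by (vii)$\Rightarrow$(iv) every r.e.\ $T$-convergent real $\alpha$ satisfies $H(\rest{\alpha}{n})\le H(\rest{\Omega_U(T)}{n})+O(1)$ for \emph{all} $n\in\N^+$, not merely along a subsequence. Since $\Omega_U(T)$ is $T$-compressible by Theorem~\ref{pomgd}~(i), this gives $\limsup_{n}H(\rest{\alpha}{n})/n\le T$ directly. The point is that condition~(iv) already packages the uniform-in-$n$ Solovay-type comparison; attempting to rebuild it by hand from the $T$-convergent sequence is precisely where your argument runs aground.
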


\begin{proof}
Using (vii) $\Rightarrow$ (iv) of Theorem~\ref{partial randomness},
we see that,
for every r.e.~$T$-convergent real $\alpha$,
$H(\rest{\alpha}{n})\le H(\rest{\Omega_U(T)}{n})+O(1)$
for all $n\in\N^+$.
It follows from Theorem~\ref{pomgd} (i) that
$\alpha$ is $T$-compressible
for every r.e.~$T$-convergent real $\alpha$.
\qed
\end{proof}

In the case of $T<1$,
the converse of Theorem \ref{T-convergent->T-compressible}
does not hold,
as seen in
%the following theorem
Theorem~\ref{p2pt} below
in a sharper form.
Theorem~\ref{p2pt} can be proved
%partially
partly
using (vii) $\Rightarrow$ (ix) of Theorem~\ref{partial randomness}.

\begin{theorem}\label{p2pt}
Suppose that $T$ is computable and $T<1$.
Then there exists an r.e.~real $\eta$ such that
(i) $\eta$ is weakly Chaitin $T$-random and $T$-compressible,
and
(ii) $\eta$ is not $T$-convergent.
\qed
\end{theorem}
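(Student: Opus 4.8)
The plan is to realize $\eta$ in the form $\eta=\Omega_U(T)+\beta$, where $U$ is the standard optimal computer and $\beta$ is a non-negative r.e.\ real, kept small enough that $0<\eta<1$, which will be constructed by an effective diagonalization so that $\eta$ is not $T$-convergent. Granting such a $\beta$, condition~(vii) of Theorem~\ref{partial randomness} holds for $\eta$ by construction, hence all ten conditions hold; in particular (i) gives that $\eta$ is weakly Chaitin $T$-random, and (vii)~$\Rightarrow$~(ix) tells us that \emph{every} computable increasing sequence of rationals converging to $\eta$ is $T$-universal, which is what lets the diagonalization be organized around a single generic competing sequence rather than around each program separately. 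So the whole task reduces to building such a $\beta$ so that, simultaneously, $\eta$ stays $T$-compressible and $\eta$ fails to be $T$-convergent.

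For the $T$-compressibility I would prove the stronger statement that $\eta$ is $T'$-convergent for every computable $T'$ with $T<T'\le 1$. Indeed, $\Omega_U(T)$ is $T$-convergent by Theorem~\ref{tomegavt}, hence $T'$-convergent for every $T'\ge T$ since $x^{T'}\le x^{T}$ for $0\le x\le 1$; and the construction will keep the increments $\delta_0,\delta_1,\dotsc$ of the canonical approximation to $\beta$ satisfying both $\sum_i\delta_i<\infty$ and $\sum_i\delta_i^{T'}<\infty$ for every $T'>T$. As $x\mapsto x^{T'}$ is subadditive on $[0,\infty)$ when $T'\le 1$, the termwise sum of two $T'$-convergent computable increasing sequences of rationals is again $T'$-convergent, so $\eta=\Omega_U(T)+\beta$ is $T'$-convergent for every $T'>T$. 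Applying Theorem~\ref{T-convergent->T-compressible} for each such (computable) $T'$ gives $\limsup_{n\to\infty}H(\rest{\eta}{n})/n\le T'$ for all $T'>T$, hence $\le T$; thus $\eta$ is $T$-compressible, and together with weak Chaitin $T$-randomness this establishes part~(i).

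The construction of $\beta$ is where the real work lies. For each $e$ we must meet the requirement that the $e$-th partial computable function, if it codes a $T$-convergent computable increasing sequence of positive rationals, has limit different from $\eta$. Each requirement is given a sequence of ``budgets'' $g_{e,0}>g_{e,1}>\dotsb$ chosen so that $\sum_k g_{e,k}<\infty$ while $\sum_k g_{e,k}^{T}=\infty$ --- possible precisely because $T<1$ (e.g.\ $g_{e,k}\asymp k^{-1/T}$, so $\sum_k g_{e,k}$ converges as $1/T>1$, whereas $\sum_k g_{e,k}^{T}\asymp\sum_k k^{-1}$ diverges) --- with the total of all budgets over all $e$ kept below $1-\Omega_U(T)$, so that $0<\eta<1$. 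In successive phases a requirement waits for its competitor's partial sums to climb up to a current threshold and then, spending an increment $g_{e,k}$ of $\beta$ as currency, pushes the common approximation to $\eta$ past the point the competitor has reached; the one inequality really needed here is that for $0<T\le 1$ one has $\sum_i x_i\le\bigl(\sum_i x_i^{T}\bigr)^{1/T}$, so a tail of the competitor with small total $T$-weight has correspondingly small sum and therefore cannot ever catch up to $\eta$ once it has been pushed. Consequently, if a competitor converges to $\eta$ and is intervened upon only finitely often its limit is pinned strictly below $\eta$, while if it is intervened upon infinitely often then infinitely many disjoint blocks of its increments are each forced to carry $T$-weight at least a fixed multiple of $g_{e,k}^{T}$, so the total $T$-weight of its increments is at least $\sum_k g_{e,k}^{T}=\infty$ and it is not $T$-convergent; either way the requirement is met, giving part~(ii). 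The assembled $\beta$ is the r.e.\ sum of all the pushes, it is non-negative, and since each push has size $g_{e,k}$ the increments of its canonical approximation are exactly these, which by the choice of budgets makes $\beta$ $T'$-convergent for every $T'>T$, as needed in the second paragraph.

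The main obstacle I anticipate is the bookkeeping of this construction: arranging thresholds and pushes so that the competitor is genuinely made to climb through a non-negligible amount in each block --- the usual overshoot phenomenon, where one large increment of the competitor leaps past several thresholds at once, must be accommodated, though it is in fact harmless since such an increment already carries the needed $T$-weight by itself --- verifying that the requirements never injure one another (they do not, because budgets are spent monotonically and the total is finite, so no priority ordering is required), and above all making the trade-off ``finite total budget versus forced infinite $T$-weight'' go through uniformly for all $T<1$. I expect the step most in need of care to be the verification that the canonical approximation of the assembled $\beta$ really is $T'$-convergent for every $T'>T$, since that is exactly what prevents $\eta$ from being compressed below rate~$T$.
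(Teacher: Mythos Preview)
The paper does not actually prove Theorem~\ref{p2pt}; it only records the hint that the proof ``can be proved partly using (vii) $\Rightarrow$ (ix) of Theorem~\ref{partial randomness}.'' Your overall architecture is compatible with that hint: you put $\eta=\Omega_U(T)+\beta$, invoke condition~(vii) to obtain weak Chaitin $T$-randomness, obtain $T$-compressibility by showing $\eta$ is $T'$-convergent for every computable $T'>T$ and then applying Theorem~\ref{T-convergent->T-compressible}, and finally diagonalize to kill $T$-convergence. The $T$-compressibility argument via $T'$-convergence for all $T'>T$ is clean and correct. Two points, however, deserve attention.

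First, the role you assign to (vii)~$\Rightarrow$~(ix) does not match the construction you then describe. You assert that $T$-universality ``lets the diagonalization be organized around a single generic competing sequence rather than around each program separately,'' yet your construction is a standard requirement-by-requirement diagonalization against every index $e$. Nothing in your verification step uses the $T$-universality of approximations to $\eta$. Either drop the reference to (ix) and carry out the direct diagonalization, or explain concretely how $T$-universality collapses the family of potential $T$-convergent approximations into a single object that can be defeated---as written, the two halves of the paragraph are inconsistent. (A minor related point: the bound ``total budget below $1-\Omega_U(T)$'' is not effective since $\Omega_U(T)$ is only left-computable; use a rational scaling or appeal to condition~(vi) instead of (vii) to keep $0<\eta<1$.)

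Second, and more substantively, your treatment of overshoot is not correct as stated. You claim that if one competitor increment leaps past thresholds $k,\dots,k+m$ at once this is harmless ``since such an increment already carries the needed $T$-weight by itself.'' But a single jump of size $\Delta\ge\sum_{j=k}^{k+m-1}g_{e,j}$ contributes only $\Delta^{T}\ge\bigl(\sum_{j}g_{e,j}\bigr)^{T}$ to the $T$-weight, and for $T<1$ the subadditivity inequality $\bigl(\sum_j g_{e,j}\bigr)^{T}\le\sum_j g_{e,j}^{T}$ goes the \emph{wrong} way for your purposes: you cannot conclude that the accumulated $T$-weight is at least $\sum_j g_{e,j}^{T}$. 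Indeed, if the competitor crosses all thresholds in a single jump, the contributed $T$-weight is bounded by $(\sum_k g_{e,k})^T<\infty$. The standard repair is to set each threshold \emph{adaptively}, strictly above the competitor's value at the moment the previous threshold was reached (so that one increment can cross at most one threshold), but then you must also argue that the extra amount you push to clear the overshoot does not blow up the total budget for~$\beta$---this is exactly the bookkeeping you flag as the main obstacle, and it is not resolved by the sentence you wrote.
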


Let $T_1$ and $T_2$ be arbitrary computable reals
%with $0<T_1<T_2\le 1$,
with $0<T_1<T_2<1$,
and let $V$ be an arbitrary optimal computer.
%By the above proposition and Theorem \ref{pomgd} (i),
By Theorem~\ref{pomgd} (i) and Theorem~\ref{T-convergent->T-compressible},
we see that
the r.e.~real $\Omega_V(T_2)$ is not $T_1$-convergent
and therefore
every computable, increasing sequence $\{a_n\}$ of rationals
which converges to $\Omega_V(T_2)$ is not $T_1$-convergent.
%On the other hand,
%Thus,
%Then,
At this point,
conversely,
the following question arises naturally:
Is there any computable, increasing sequence of rationals
which converges to $\Omega_V(T_1)$ and
which is not $T_2$-convergent~?
We can answer this question affirmatively
%as follows.
%as in the following stronger form.
%in the following form.
in the form of Theorem~\ref{expansion} below.

\begin{theorem}\label{expansion}
Let $T_1$ and $T_2$ be arbitrary computable reals
with $0<T_1<T_2<1$.
Then
there exist an optimal computer $V$ and
a computable, increasing sequence $\{a_n\}$ of
rationals such that
(i) $\Omega_V(T_1)=\lim_{n\to\infty}a_n$,
(ii) $\{a_n\}$ is $T$-convergent for every $T\in(T_2,\infty)$, and
(iii) $\{a_n\}$ is not $T$-convergent for every $T\in(0,T_2]$.
\qed
\end{theorem}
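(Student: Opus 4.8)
The plan is to take $V$ to be the standard optimal computer $U$ (any optimal computer would do) and to build $\{a_n\}$ by \emph{refining} a $T_1$-convergent rational approximation of $\Omega_U(T_1)$ — subdividing its increments — just heavily enough to raise the critical convergence exponent from $T_1$ up to exactly $T_2$, without changing the limit. By Theorem~\ref{tomegavt}, $\Omega_U(T_1)$ is an r.e.\ $T_1$-convergent real, so by the definition of that notion I can fix a computable, increasing sequence $\{c_n\}$ of rationals with $\lim_n c_n=\Omega_U(T_1)$ and $\sum_n g_n^{T_1}<\infty$, where $g_n:=c_{n+1}-c_n>0$; then $g_n\to 0$ (so $g_n\le 1$ for all but finitely many $n$) and $\sum_n g_n<\infty$. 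Note that $\{c_n\}$ is $T$-convergent for every $T\ge T_1$, in particular more than $T_2$-convergent; since subdividing increments can only increase $\sum_n(\cdot)^T$, any refinement can only raise the critical exponent, and the whole difficulty is to raise it to \emph{exactly} $T_2$ — not above, not below. This argument will use only Theorem~\ref{tomegavt} plus an elementary estimate, not the main theorem.

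For the refinement, fix a computable $\phi\colon\N\to\N^{+}$ with $\sum_n 1/\phi(n)=\infty$ but $\sum_n\phi(n)^{-s}<\infty$ for every real $s>1$ (for instance $\phi(n)=(n+2)\lceil\log_2(n+3)\rceil$; this harmonic-with-a-logarithm decay is exactly what makes the threshold land \emph{at} $T_2$). For each $n$ let $\ell_n$ be a computable positive integer with
\[(g_n^{-T_2}/\phi(n))^{1/(1-T_2)}\;\le\;\ell_n\;\le\;2\max(1,(g_n^{-T_2}/\phi(n))^{1/(1-T_2)}),\]
which can be chosen since $T_1,T_2$ are computable and $g_n,\phi(n)$ are rational. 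Now obtain $\{a_n\}$ from $\{c_n\}$ by subdividing each interval $[c_n,c_{n+1}]$ into $\ell_n$ equal rational pieces and listing the grid points in order; then $\{a_n\}$ is computable, increasing, rational, with limit $\lim_n c_n=\Omega_U(T_1)$, which is (i), and for every $T>0$
\[\sum_n (a_{n+1}-a_n)^T=\sum_n \ell_n (g_n/\ell_n)^T=\sum_n \ell_n^{1-T}g_n^T=:S(T).\]

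It then remains to show $S(T)=\infty$ for $T\in(0,T_2]$ and $S(T)<\infty$ for $T>T_2$. For divergence: raising the lower bound for $\ell_n$ to the power $1-T_2>0$ gives $\ell_n^{1-T_2}\ge g_n^{-T_2}/\phi(n)$, hence $\ell_n^{1-T_2}g_n^{T_2}\ge 1/\phi(n)$, so $S(T_2)\ge\sum_n 1/\phi(n)=\infty$; and for $T<T_2$ one has $(a_{n+1}-a_n)^T\ge (a_{n+1}-a_n)^{T_2}$ for all but finitely many $n$ (the increments tend to $0$), so $S(T)=\infty$ as well — this is (iii). For convergence I would split $S(T)$ according to whether $g_n\ge\phi(n)^{-1/T_2}$, in which case $\ell_n\le 2$ and the term is $\le 2g_n^{T}\le 2g_n^{T_1}$ (using $g_n\le 1$, $T>T_1$), summable by the choice of $\{c_n\}$; or $g_n<\phi(n)^{-1/T_2}$, in which case inserting the bound for $\ell_n$ and then $g_n<\phi(n)^{-1/T_2}$ collapses all the $g_n$- and $\phi(n)$-exponents and the term is $\le 2\phi(n)^{-T/T_2}$, which is summable because $T/T_2>1$. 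This gives $S(T)<\infty$ for every $T>T_2$, i.e.\ (ii). The main obstacle is precisely this calibration of the subdivision counts $\ell_n$: they must be large enough to drag the critical exponent all the way up to $T_2$, yet small enough that $S(T)$ still converges for every $T>T_2$; pinning the threshold exactly at $T_2$ — with genuine divergence \emph{at} $T=T_2$ — is what forces both the logarithmic correction in $\phi$ and the careful tracking of exponents in the two-case estimate. A minor additional point is to make $\ell_n$ genuinely computable, e.g.\ by choosing it inside the interval above from a rational over-approximation of its left endpoint.
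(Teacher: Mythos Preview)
The paper omits the proof of this theorem entirely (it ends with a bare \qed; the introduction notes that most proofs are deferred to a full version), so there is nothing in the paper to compare your argument against. Judged on its own, your argument is correct and essentially complete.

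Your construction---refining a $T_1$-convergent rational approximation $\{c_n\}$ of $\Omega_U(T_1)$ by splitting each gap $g_n=c_{n+1}-c_n$ into $\ell_n$ equal pieces, with $\ell_n\approx(g_n^{-T_2}/\phi(n))^{1/(1-T_2)}$---is exactly the right idea, and the exponent bookkeeping checks out. In Case~2 the key identity is that the $g_n$-exponent in $\ell_n^{1-T}g_n^T$ collapses to $(T-T_2)/(1-T_2)>0$, and then substituting $g_n<\phi(n)^{-1/T_2}$ turns the whole term into $O(\phi(n)^{-T/T_2})$; you state this correctly. For $T\ge 1$ the same bound follows (using the \emph{lower} bound on $\ell_n$ when $1-T<0$, or trivially when $T=1$), so (ii) indeed holds on all of $(T_2,\infty)$. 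The divergence argument at $T=T_2$ and below is fine. Two minor remarks: the computability of $\ell_n$ is handled most cleanly by choosing any integer in an interval of length $>1$ around the target value (as you indicate at the end), and you should remark once that $\ell_n\ge 1$ is a positive integer so the refined sequence is genuinely increasing and rational. Since the statement only asks for \emph{some} optimal computer, taking $V=U$ is perfectly legitimate; the theorem's existential quantifier over $V$ is not actually doing any work in your proof.
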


%%%% for compression
\vspace*{-7mm}

%%%%%%%%%%%%%%%%%%%%%%%%%%%%%%%%%%%%%%%%%%%%%%%%%%%%%%%%%%%%%%%%%%%%%%%%%%
\section{Concluding remarks}
\label{conclusion}

In this paper,
we have generalized
the equivalent characterizations of randomness
of a recursively enumerable real
over the notion of partial randomness,
so that the generalized characterizations
are all equivalent to the weak Chaitin $T$-randomness.
As a stronger notion of partial randomness of a real $\alpha$,
Tadaki \cite{T99,T02} introduced
the notion of the Chaitin $T$-randomness of $\alpha$,
which is defined as the condition on $\alpha$ that
$\lim_{n\to\infty}H(\rest{\alpha}{n})-Tn=\infty$.%
\footnote{
The actual separation of the Chaitin $T$-randomness
from the weak Chaitin $T$-randomness is done by
Reimann and Stephan \cite{RS05}.}
Thus,
future work may aim at modifying
our equivalent characterizations of partial randomness
so that they
become
equivalent to the Chaitin $T$-randomness.

%%%%%%%%%%%%%%%%%%%%%%%%%%%%%%%%%%%%%%%%%%%%%%%%%%%%%%%%%%%%%%%%%%%%%%%%%%
\subsubsection*{Acknowledgments.}

This work was supported
by KAKENHI, Grant-in-Aid for Scientific Research (C) (20540134),
by SCOPE
from the Ministry of Internal Affairs and Communications of Japan,
and by CREST from Japan Science and Technology Agency.

%%%%%%%%%%%%%%%%%%%%%%%%%%%%%%%%%%%%%%%%%%%%%%%%%%%%%%%%%%%%%%%%%%%%%%%%%%

%% in general the use of bibtex is encouraged

%%%%%%%%%%%%%%%%%%%%%%%%%%%%%%%%%%%%%%%%%%%%%%%%%%%%%%%%%%%%%%%%%%%%%%%%%%%
\end{document}